\DeclareMathOperator*{\argmax}{argmax}
\newtheorem{theorem}{Theorem}
\newtheorem{claim}[theorem]{Claim}
\newtheorem{corollary}[theorem]{Corollary}
\newtheorem{lemma}[theorem]{Lemma}
\newcommand{\eb}{b}
\newcommand{\cost}{\mathrm{cost}}
\newcommand{\cA}{\mathcal{A}}
\newcommand{\cD}{\mathcal{D}}
\newcommand{\opt}{\textsc{opt}}
\newcommand{\FiF}{\textsf{FiF}}
\newcommand{\eps}{\epsilon}
\newcommand{\rk}{r}
\newcommand{\RandMarker}{\textsf{RandomMarker}\xspace}
\newcommand{\LRU}{\textsf{LRU}\xspace}
\newcommand{\BlindOracle}{\textsf{BlindOracle}\xspace}
\newcommand{\LVMarker}{\textsf{LVMarker}\xspace}
\newcommand{\RMarker}{\textsf{RohatgiMarker}\xspace}
\newcommand{\RobustOracle}{\textsf{RobustOracle}\xspace}
\newcommand{\AdaptiveQuery}{\textsf{AdaptiveQuery}\xspace}
\newcommand{\E}{\mathbb{E}}
\newcommand{\chain}{C}
\newlength\mylen
\begin{document}
\onecolumn

\icmltitle{Parsimonious Learning-Augmented Caching} %

\icmlsetsymbol{equal}{*}

\begin{icmlauthorlist}
\icmlauthor{Sungjin Im}{merced}
\icmlauthor{Ravi Kumar}{goog}
\icmlauthor{Aditya Petety}{merced}
\icmlauthor{Manish Purohit}{goog}
\end{icmlauthorlist}

\icmlaffiliation{merced}{University of California, Merced, CA, USA.}
\icmlaffiliation{goog}{Google Research, Mountain View, CA, USA}

\icmlcorrespondingauthor{Sungjin Im}{sim3@ucmerced.edu}
\icmlcorrespondingauthor{Manish Purohit}{mpurohit@google.com}

\icmlkeywords{Parsimonious Queries, Learning-augmented Algorithms, Caching}

\vskip 0.3in

\printAffiliationsAndNotice{\icmlEqualContribution}

\begin{abstract}
Learning-augmented algorithms---in which,  traditional algorithms are augmented with machine-learned predictions---have emerged as a framework to go beyond worst-case analysis. The overarching goal is to design algorithms that perform near-optimally when the predictions are accurate yet retain certain worst-case guarantees irrespective of the accuracy of the predictions. This framework has been successfully applied to online problems such as caching where the predictions can be used to alleviate uncertainties.

In this paper we introduce and study the setting in which the learning-augmented algorithm can utilize the predictions parsimoniously.  We consider the caching problem---which has been extensively studied in the learning-augmented setting---and show that one can achieve quantitatively similar results but only using a \emph{sublinear} number of predictions.
\end{abstract}

\section{Introduction}

Learning-augmented algorithms have recently emerged as a framework to strengthen traditional algorithms with machine learned predictions. Traditional algorithm design obtains algorithms with formal guarantees for \emph{all} inputs.  Hence, they are often geared towards working well on worst-case inputs and not for typical, real-world instances.  In contrast, machine learning performs extremely well on typical instances but can occasionally fail on rare instances.  The learning-augmented framework aims to design algorithms that can benefit from the machine learning predictions while retaining worst-case guarantees. 

This framework was initiated by~\citet{kraska2018case}, who demonstrated that indexed data structures can be improved using learned predictions. Inspired by their work, \citet{lykouris2018competitive} studied the classic online caching problem and obtained an algorithm whose performance guarantee gracefully degrades as the prediction quality worsens but still remains robust regardless of the prediction quality. %
The learning-augmented framework has found applications in streaming algorithms, data structures, and particularly for online algorithms where predictions can alleviate the uncertainties for unseen future inputs; see the survey~\cite{MitzenmacherV20}.

In this paper we focus on an important yet largely overlooked aspect in previous works---the \emph{cost of predictions}. Predictions are typically obtained from an ML model, which can be computationally expensive; this makes it highly desirable to use predictions \emph{parsimoniously}. In this work we study online caching in the learning-augmented framework in which hints are used sparingly.

\paragraph{Online caching.}
In online caching, a sequence of page requests arrive at a cache of size $k$. If the requested page is in the cache, then it can be served at no extra cost, but otherwise a cache miss occurs to fetch the missing page into the cache. The goal of an online algorithm is to minimize the number of cache misses.  A number of randomized algorithms are known to be $\Theta(\log k)$-competitive~\cite{achlioptas2000competitive, fiat1991competitive}, meaning that they incur $O(\log k)$ times more cache misses than the offline optimal solution for all inputs, and this is the best possible. 
Belady's \emph{furthest-in-future} algorithm~\cite{belady} that always evicts the page whose next request is the furthest in the future is well-known to be the optimal offline algorithm

To exploit predictions,~\citet{lykouris2018competitive} proposed an algorithm that assumes the knowledge of the next \emph{predicted} request time of \emph{all} pages in the cache. For the $\ell_1$-norm prediction error with respect to the actual arrival times, they showed that the algorithm's competitive ratio improves to $O(1)$ as the error tends to 0 and remains $O(\log k)$ always. These bounds have further been quantitatively improved recently by~\citet{rohatgi2020near,wei2020better}.

\paragraph{Our contributions.}

We show that we can use significantly fewer predictions for online caching to obtain  results similar to the aforementioned work. More precisely, we allow our algorithm to \emph{query} $b$ pages in cache to learn their predicted next arrival time for each cache miss. 
We show that we can obtain an $O(\log_{b+1} k)$-competitive ratio for good predictions while retaining the  $O(\log k)$-competitive ratio always (Theorem~\ref{thm:main-with-errors}).  Thus, as long as the cache miss rate is $\frac{1}{k^\eps}$ for any constant $\eps > 0$, we can obtain a constant $O(\log \frac{1}{\eps})$-competitive ratio using a \emph{sublinear} number of queries in the number of page requests. 
We also show that our trade-off is near-optimal (Theorem~ \ref{thm:lb}).
Our experiments show that even with very few queries, e.g., making two queries per cache miss, we can significantly improve the traditional online algorithms with the worst case guarantees in practice. The experimental results also demonstrate that we can match (and even exceed) the performance of prior learning-augmented algorithms but querying only $\sim 11\%$ of the page requests.

As is typical for most caching algorithms, our algorithm is also based on the randomized marking algorithm.   However, instead of evicting a randomly chosen page per cache miss, the algorithm queries $b$ unmarked pages in the cache and evicts the one with the furthest predicted request time.  At the high-level, if there are $k$ unmarked pages, we can show that the evicted page is not requested before $k / (b+1)$ pages in the cache  in expectation, provided all the predictions are correct.  Using this we can formally show how to reduce the number of cache misses. While this idea is easy to state, the analysis is delicate as the prediction error is defined only over the pages that were queried. To keep the competitive ratio $O(\log k)$, we follow the technique of \citet{lykouris2018competitive} and switch to using the randomized marking strategy once we detect that the algorithm has made too many mistakes.
The lower bound is shown by an explicit but intricate construction; the formal analysis is quite subtle. 

\paragraph{Related work.}

Online caching has been extensively studied in the literature.  For generalizations of caching, including the $k$-server problem, see \cite{koutsoupias1995k,bansal2015polylogarithmic,bubeck2018k,lee2018fusible}; see also~\cite{bansal2012primal,adamaszek2012log}.  The reader is referred to the book by~\citet{borodin2005online} for a general overview of online algorithms. 

Learning-augmented algorithms largely fall in the rubric of ``beyond worst-case algorithms''; see~\cite{R2020} for an extensive survey of the field. They have recently been extensively explored particularly for online algorithms, including load balancing \cite{LattanziLMV,li2021online}, rent-or-buy \cite{purohit2018improving}, scheduling \cite{AzarLT21}, online set cover \cite{BamasMS20}, metrical task systems \cite{antoniadis2020online}, and many others. For online caching, its weighted version has been studied in \cite{JiangP020,wcaching2}. 

The problem of learning-augmented algorithms with sub-linear number of queries was recently studied by~\citet{BCKP21}, but in the regret setting for online linear optimization.  Our paper studies an analogous question for caching, but in the competitive ratio setting.

\section{Model}

Let ${\cal U}$ denote a universe of pages and $k$ be the number of distinct pages that can be held in the cache at any time. In the classical unweighted \emph{caching} problem, a sequence $\Gamma = \langle p_1, p_2, \ldots \rangle$, where each $p_i \in {\cal U}$, of page requests arrives online and the algorithm is required to maintain a set of at most $k$ pages in the cache at any time. At any time $t$, if the currently requested page $p_t$ is not in the cache, then the algorithm incurs a \emph{cache miss} and must fetch the requested page in the cache (possibly by evicting some other page). The objective of the online algorithm is to minimize the total number of cache misses incurred.

Note that an online algorithm has to choose the page to be evicted without knowing $\Gamma$.  We measure its performance by comparing against Belady's \emph{furthest-in-the-future (\FiF)} algorithm~\cite{belady}, which is the optimal offline algorithm that knows $\Gamma$.  Let $\cost_{\Gamma}(\cdot)$ denote the total number of cache misses of an  algorithm for the request sequence $\Gamma$ and let $\opt = \cost_{\Gamma}(\FiF)$ be the cost of the optimal offline solution.  An online (randomized) algorithm $\cal A$ is said to be \emph{$c$-competitive} if for all request sequences $\Gamma$, it holds that
\[
E[\cost_\Gamma({\cal A})] 
\leq
c \cdot \opt + b,
\]
where $b \geq 0$ is a  constant  independent of the  length  of $\Gamma$, and the expectation is over the randomness (if any) of  ${\cal A}$.  For brevity, from now on we will work with a given $\Gamma$ and omit it from all the subsequent notation.  

In the usual learning-augmented setting, at each time $t$, along with the requested page $p_t$, the algorithm is presented with a (possibly noisy) prediction $\tau_t \in \mathbb{N}$ for the next time after $t$ that the page $p_t$ will be requested again; hence the predicted arrival time of the next request is available for every page in the cache.  In the learning-augmented setting \emph{with queries}, at any time $t$ and for any page $p$ that is in the cache, the algorithm is allowed to \emph{query} a possibly noisy (stochastic) oracle $\cal Q$ for the time, after $t$, of the next request for $p$.  Let $\tau_{p, t} = {\cal Q}(p,t)$ denote such a predicted arrival time of the next request to page $p$ after time $t$;  let $a_{p,t} \geq t$ denote the actual arrival time of the next request to page $p$. Let $Q$ be the set of queries made to ${\cal Q}$. We define the \emph{error} of the oracle to be $\eta = \sum_{(p,t) \in Q} |\tau_{p,t} - a_{p,t}|$. %

The learning-augmented setting with queries generalizes many well-studied caching problems.  On one hand, if the algorithm makes no queries to the oracle, then it is the standard caching problem and we can get a $O(\log k)$-competitive solution, say, with a randomized marking algorithm~(see Section~\ref{sec:marking}).  On the other hand, if the oracle is error-free and the algorithm queries it at every time step, Belady's algorithm yields the optimal solution.  In a recent work,~\citet{lykouris2018competitive,wei2020better,rohatgi2020near} designed a learning-augmented caching algorithm for noisy oracles, showing a tight trade-off between the error of the oracle and the competitive ratio of the algorithm; their algorithm, however, queries the oracle at every time step.  The question we ask in this paper is: can we get similar trade-offs but using much fewer queries?

\section{Preliminaries}
\label{sec:prelim}

A pair $(p,t_1)$ and $(q,t_2)$ of queries in $Q$ is called an \emph{inversion} if $\tau_{p,t_1} \geq \tau_{q,t_2} \text{ but } a_{p,t_1} < a_{q,t_2}$, i.e., the next request of page $p$ is earlier than that of $q$ although the predictions indicated otherwise. Let $I = |\{(p,t_1), (q,t_2) \mid \tau_{p,t_1} \geq \tau_{q,t_2} \text{ but } a_{p,t_1} < a_{q,t_2}\}|$ be the number of inversions. 
The following relates the number of inversions to the  error. 

\begin{lemma}[\citet{Diaconis1977SpearmansFA, rohatgi2020near}]
\label{lem:dg}
For any request sequence $\Gamma$ and any set $Q$ of queries, %
\[\eta \geq \dfrac{1}{2}I.\]
\end{lemma}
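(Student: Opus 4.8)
The plan is to relate the error $\eta = \sum_{(p,t)\in Q}|\tau_{p,t} - a_{p,t}|$ to the number of inversions $I$ by a charging argument: each inversion must be "paid for" by the prediction errors of the two queries involved, and no single unit of error is charged too many times. Concretely, for an inverted pair $\{(p,t_1),(q,t_2)\}$ we have $\tau_{p,t_1}\ge \tau_{q,t_2}$ while $a_{p,t_1} < a_{q,t_2}$. This forces the two true arrival times to straddle a gap that the predictions got "backwards," so at least one of $|\tau_{p,t_1}-a_{p,t_1}|$ and $|\tau_{q,t_2}-a_{q,t_2}|$ must be reasonably large; in fact one can show $|\tau_{p,t_1}-a_{p,t_1}| + |\tau_{q,t_2}-a_{q,t_2}| \ge a_{q,t_2} - a_{p,t_1} \ge 1$ for each inverted pair. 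But summing this bare inequality over all $I$ inversions overcounts the error, since a single query can be involved in many inversions, so I need the sharper accounting that underlies the Diaconis–Graham inequality.

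The cleaner route is the classical Diaconis–Graham argument itself. Sort the queries in $Q$ by their predicted times $\tau$, breaking ties consistently, and let $\pi$ be the permutation describing where the true arrival times $a$ land in this order. Then $I$ is exactly the number of inversions of $\pi$, and $\eta$ is (up to the discretization of times into integer ranks) at least the "total displacement" $\sum_i |\pi(i) - i|$, i.e. Spearman's footrule. The Diaconis–Graham inequality states precisely that the footrule $D(\pi) = \sum_i |\pi(i)-i|$ and the inversion count $I(\pi)$ satisfy $I(\pi) \le D(\pi) \le 2 I(\pi)$; the right-hand inequality $D(\pi) \le 2 I(\pi)$ is what we invoke, giving $\eta \ge D(\pi) \ge \tfrac12 I(\pi) = \tfrac12 I$. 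Since this is exactly the cited result of \citet{Diaconis1977SpearmansFA} as adapted by \citet{rohatgi2020near}, the proof amounts to setting up this correspondence carefully and then quoting the inequality.

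First I would make the reduction to a permutation explicit: restrict attention to the queries in $Q$ and observe that both the prediction values $\{\tau_{p,t}\}$ and the true values $\{a_{p,t}\}$ induce total preorders on $Q$; fix a common tie-breaking rule so that both become total orders, yielding a permutation $\pi$ with $I(\pi) = I$. Second, I would argue $\eta \ge D(\pi)$: this is where one uses that moving an element from rank $i$ to rank $\pi(i)$ in a sorted list of distinct integer times requires the corresponding value to change by at least $|\pi(i)-i|$ — more carefully, one shows $\sum_{(p,t)}|\tau_{p,t}-a_{p,t}| \ge \sum_i |\pi(i)-i|$ by a rearrangement/exchange argument on the sorted sequences. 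Third, apply $D(\pi)\ge \tfrac12 I(\pi)$ from the Diaconis–Graham inequality and conclude.

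The main obstacle is the second step, $\eta \ge D(\pi)$: the footrule is defined in terms of ranks, but $\eta$ is an $\ell_1$ distance between the actual time values, and ties among predicted or actual times need to be handled so that the induced permutation is well-defined and the rank-displacement lower bound on $|\tau-a|$ still holds. I expect this to require a short but careful argument — essentially that if we sort both the $\tau$-list and the $a$-list, the $\ell_1$ cost of transforming one into the other dominates the footrule of the associated permutation — after which invoking Lemma-style the stated Diaconis–Graham bound is immediate. Since the excerpt attributes the lemma directly to prior work, it would also be legitimate to simply cite \citet{Diaconis1977SpearmansFA, rohatgi2020near} for the permutation inequality and present only the reduction from the caching error to that combinatorial quantity.
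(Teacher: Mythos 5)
First, a framing note: the paper does not prove this lemma at all---it is imported verbatim from prior work---so there is no in-paper argument to compare against, and your sketch has to stand on its own. Its architecture is reasonable (pass to the permutation between the $\tau$-order and the $a$-order, relate $\eta$ to Spearman's footrule $D(\pi)$, invoke Diaconis--Graham), but both quantitative steps are stated incorrectly, and they happen to compensate. The step $\eta \geq D(\pi)$ is false: take $a_i=\tau_i=i$ for $i=2,\dots,n$ and $a_1=1$, $\tau_1=n+1$; then $\eta=n$, while item $1$ is displaced by $n-1$ ranks and every other item by one rank, so $D(\pi)=2(n-1)>\eta$. The correct relation is $D(\pi)\leq 2\eta$. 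Separately, you invoke the right-hand Diaconis--Graham inequality $D(\pi)\leq 2I(\pi)$ to conclude $D(\pi)\geq\tfrac12 I(\pi)$; that is a non sequitur, since $D\leq 2I$ bounds $D$ from above. The half you actually need is the left-hand one, $I(\pi)\leq D(\pi)$. The valid chain is $I\leq D(\pi)\leq 2\eta$, so the factor $2$ lives entirely in the $\eta$-versus-footrule comparison---which is precisely the step you defer as ``short but careful.'' As written, the proof is incomplete exactly where the constant comes from. Ties are also a genuine issue, not a formality: the paper's inversions use the weak inequality $\tau_{p,t_1}\geq\tau_{q,t_2}$, so if all predictions are equal and the actual times are $1,\dots,n$, then $I=\binom{n}{2}$, yet a ``consistent'' tie-break aligned with the actual order gives $I(\pi)=0$. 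You must break prediction ties in reverse actual order so that $I(\pi)\geq I$; this same instance (with $\eta\approx n^2/4$) also shows the constant $\tfrac12$ is tight.

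If you want a self-contained argument, there is a direct charging that avoids the footrule entirely. For an inversion with $a_i<a_j$ and $\tau_i\geq\tau_j$, let $m=\tfrac{a_i+a_j}{2}$; since $\tau_i\geq\tau_j$, either $\tau_i\geq m$ or $\tau_j\leq m$, and we charge the inversion to $i$ in the first case and to $j$ in the second. A charge to $i$ as the earlier element forces $\tau_i>a_i$ and $a_i<a_j\leq 2\tau_i-a_i$, so $i$ receives at most $2|\tau_i-a_i|$ such charges because the actual times are distinct integers; symmetrically, a charge to $i$ as the later element forces $\tau_i<a_i$ and yields at most $2|a_i-\tau_i|$ charges, and the two cases are mutually exclusive. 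Summing over pages gives $I\leq 2\eta$ directly.
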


\subsection{Marking algorithms}
\label{sec:marking}

Marking algorithms are a class of caching algorithms that associate a ``marking'' bit with each page in the cache, and upon a cache miss only evict an \emph{unmarked} page from the cache. Formally, the algorithm first divides the request sequence into phases where a \emph{phase} is a maximal contiguous sequence of requests to only $k$ distinct pages. At the beginning of each phase, all pages in the cache are unmarked. Pages that are requested during the phase get marked one by one and upon any cache miss, the algorithm only evicts some \emph{unmarked} page. Once all the pages in the cache have been marked, a new phase begins and the process repeats. It is well known that \emph{any} marking algorithm is $O(k)$-competitive and the \emph{randomized marking}~\cite{fiat1991competitive} algorithm, which evicts an unmarked page chosen uniformly at random, is $O(H(k))$-competitive, where $H(k) := 1 + \frac{1}{2} + \cdots + \frac{1}{k} = \Theta(\log k)$.

\begin{algorithm2e}
\For{each requested page $p$} {
\eIf{$p$ in cache}{
``Mark'' $p$
}{
\If{all pages in cache are marked} {
Unmark all pages
}
\emph{Evict} an unmarked page

Fetch $p$ in cache and ``mark'' it
}
}
\caption{A generic \emph{marking} algorithm.}
\label{alg:marking}
\end{algorithm2e}

Consider any phase $h$ and an arbitrary page $p$ that is requested in phase $h$. We say that page $p$ is \emph{clean} if $p$ was not requested in the previous phase (i.e., phase $h-1$), and we say $p$ is \emph{stale} otherwise. Note that once $k$ is known, the phases of the sequence---as well as clean and stale pages---are determined, \emph{independent}  of the algorithm.
Let $\ell_h$ denote the total number of distinct clean pages requested in phase $h$. The following result bounds the number of cache misses incurred by the optimal offline algorithm in terms of the number of distinct clean pages.

\begin{lemma}[\cite{fiat1991competitive}]
\label{lem:marker_clean}
$\frac{1}{2} \sum_{h} \ell_h \leq \opt \leq \sum_h \ell_h$.
\end{lemma}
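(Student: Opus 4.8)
The plan is to prove the two inequalities separately, each by a per-phase accounting of cache misses. Throughout, let $S_h$ denote the set of $k$ distinct pages requested in phase $h$, so the clean pages of phase $h$ form $S_h \setminus S_{h-1}$ with $\ell_h = |S_h \setminus S_{h-1}|$; since $|S_h| = |S_{h-1}| = k$, the set $S_{h-1} \setminus S_h$ of pages requested in phase $h-1$ but not in phase $h$ also has size exactly $\ell_h$, and $|S_h \cap S_{h-1}| = k - \ell_h$. I will treat the first phase by the usual convention that the cache starts empty, so $\ell_1 = k$.

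For the upper bound $\opt \le \sum_h \ell_h$, I would exhibit one (offline) eviction strategy of total cost $\sum_h \ell_h$; since $\FiF$ is optimal offline, this suffices. The strategy maintains the invariant that at the end of phase $h$ the cache contains exactly $S_h$. Within phase $h$, the stale pages $S_h \cap S_{h-1}$ are already cached and incur no miss; the first request to each clean page incurs a miss, and upon each such miss we evict some page of $S_{h-1} \setminus S_h$, i.e., a page that is never requested again during phase $h$. Because $|S_h \setminus S_{h-1}| = |S_{h-1}\setminus S_h| = \ell_h$, a victim is always available through this pairing, no page is fetched twice within a phase, and the cache at the phase's end is $(S_{h-1}\cap S_h)\cup(S_h\setminus S_{h-1}) = S_h$, restoring the invariant. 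Hence phase $h$ contributes exactly $\ell_h$ misses and the total is $\sum_h \ell_h$.

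For the lower bound $\opt \ge \tfrac12 \sum_h \ell_h$, I would use the standard two-phase argument. Fix an optimal solution, which we may assume is demand-paging, and let $m_h$ be its number of misses in phase $h$. Over phases $h-1$ and $h$ together, the requested pages are exactly $S_{h-1}\cup S_h$, of size $k + \ell_h$. At most $k$ of these sit in the cache when phase $h-1$ begins, so at least $\ell_h$ of them are fetched during the two phases, giving $m_{h-1} + m_h \ge \ell_h$ for every $h \ge 2$; together with $m_1 \ge \ell_1$ this yields, by grouping $m_1 + (m_2+m_3) + (m_4+m_5)+\cdots \ge \ell_1 + \ell_3 + \ell_5 + \cdots$, the bound $\opt \ge \sum_{h \text{ odd}} \ell_h$, and by grouping $(m_1+m_2)+(m_3+m_4)+\cdots \ge \ell_2+\ell_4+\cdots$, the bound $\opt \ge \sum_{h \text{ even}} \ell_h$. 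Adding the two inequalities gives $2\opt \ge \sum_h \ell_h$.

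The steps are conceptually easy; the only care needed is bookkeeping at phase boundaries. For the upper bound, the one point to check is that the offline strategy never evicts a page still needed in the current phase, which is exactly guaranteed by the size equality $|S_h \setminus S_{h-1}| = |S_{h-1}\setminus S_h|$. For the lower bound, one must make sure the two-phase counting does not double-count a miss across the odd and even groupings, which is why the two sums are derived independently and only then added. I do not anticipate a genuine obstacle beyond this careful handling of phase endpoints.
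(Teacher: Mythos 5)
Your proof is correct and is exactly the classical argument of \citet{fiat1991competitive}, which the paper cites for this lemma without reproving it: the explicit offline strategy that evicts only pages of $S_{h-1}\setminus S_h$ gives the upper bound, and the two-phase counting $m_{h-1}+m_h\ge \ell_h$ combined over odd and even pairings gives the lower bound. Both halves, including the bookkeeping at phase boundaries, check out.
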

\section{Warm-up: Modified marking algorithm}
\label{sec:warm-up}

We first show how the classic randomized marking algorithm~\cite{fiat1991competitive} can be modified to effectively use predictions but making fewer queries. For ease of exposition, we assume for now that the oracle is error-free; we extend the analysis to handle noisy predictions in Section~\ref{sec:warmup-handle-error}.

We consider the following modification to the marking algorithm: whenever a page needs to be evicted, if there are at least $\eps k$ \emph{unmarked} pages in the cache, then evict an unmarked page chosen uniformly at random; otherwise, query ${\cal Q}(p, t)$ for all unmarked pages $p$ and evict the page whose next request appears furthest in the future (i.e., apply Belady's method). We remark that once we query all the remaining unmarked pages in a phase, we can simply reuse these predictions for any further cache misses and hence make at most $\epsilon k$ cache misses in any phase.
Algorithm~\ref{alg:naive_eviction} describes this naive eviction policy formally.

\begin{algorithm2e}
\SetAlgoVlined
\SetKwFunction{FRecurs}{Evict}%
\SetKwProg{Fn}{Function:}{\string:}{}
\Fn(){\FRecurs{}}{
\KwData{$U \subseteq {\cal U}$: Set of unmarked pages in cache\\
}
\KwResult{$\alpha$: Page to be evicted}
\BlankLine
\eIf{$|U| \geq \eps k$}
{ $\alpha \leftarrow $ Uniformly random page from $U$}
{
\If{we have not already queried pages in $U$ in this phase}{
\ForEach{page $p$ in $U$}{
 Let $\tau_{p} \leftarrow {\cal Q}(p, t)$
}
}
$\alpha \leftarrow \argmax_{p \in U} \tau_{p}$
}
\KwRet{$\alpha$}\;
}
\caption{Naive eviction.}
\label{alg:naive_eviction}
\end{algorithm2e}

\begin{theorem}
\label{thm:warmup}
For any $\eps > 0$, for any request sequence $\Gamma$, there is an $O(\log(1/\eps))$-competitive algorithm for caching that makes at most $\eps |\Gamma|$ queries.
\end{theorem}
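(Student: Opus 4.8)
The plan is to bound the number of queries and the number of cache misses separately, doing the latter phase by phase and feeding into Lemma~\ref{lem:marker_clean}. The query bound is the easy half. Within a phase the algorithm issues queries only once it enters the regime where fewer than $\eps k$ pages are unmarked; at that point it queries all (at most $\eps k$) unmarked pages and, as the paper notes, reuses these predictions for the rest of the phase. Since the set of unmarked pages only shrinks during a phase, this accounts for every query, so each phase makes at most $\eps k$ queries; as each complete phase contains at least $k$ requests, this sums to at most $\eps|\Gamma|$ over all phases, up to a lower-order contribution from the last, possibly incomplete, phase.

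For the competitive ratio, by Lemma~\ref{lem:marker_clean} it suffices to prove $\E[M_h] = O(\ell_h \log(1/\eps))$ for every phase $h$, where $M_h$ is the number of misses and $\ell_h$ the number of clean pages in phase $h$; summing and using $\opt \ge \tfrac12\sum_h\ell_h$ then finishes it (the first phase costs only $O(k)$, absorbed in the additive constant). To set up phase $h$, I would list its $k$ distinct pages in order of first request; just before the $i$-th such ``new'' request the cache holds exactly $i-1$ marked and $u_i = k-i+1$ unmarked pages, so the algorithm evicts uniformly at random while $u_i \ge \eps k$ (the \emph{random regime}) and runs Belady once $u_i < \eps k$ (the \emph{Belady regime}). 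Call an eviction \emph{bad} if it removes a page requested again later in the same phase. Every clean page is a miss; since a marking algorithm never evicts a marked page, each eviction removes an unmarked page, so a bad eviction removes a not-yet-requested stale page; and as each stale page is marked (hence never evicted) after its request, it misses exactly once, precisely when it is bad-evicted. Thus $M_h = \ell_h + (\text{number of bad evictions})$.

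The heart of the proof, and the step I expect to be the main obstacle, is to show that \emph{Belady evictions are never bad}, i.e., no bad eviction occurs in the Belady regime. The key observation is a counting identity: at every moment the number of unmarked pages in the cache equals the number of distinct pages of the current phase not yet requested (both equal $k$ minus the number of distinct pages already requested in the phase). Consequently, if the cache contained no \emph{junk} page---a cached page not requested again during this phase---then every not-yet-requested page of the phase would already be cached, and no cache miss could occur. So whenever a miss forces an eviction in the Belady regime, some junk page sits among the unmarked pages; and since (with error-free predictions) each junk page's next request falls after the phase ends while every other unmarked page, being a not-yet-requested stale page, is requested before then, Belady evicts a junk page. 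Hence all bad evictions happen in the random regime, where $u_i \ge \eps k$.

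To conclude, I would observe that in the random regime the algorithm behaves exactly as the randomized marking algorithm, and the two can be coupled so that within phase $h$ they agree until the first Belady eviction; therefore a stale page is bad-evicted only at a step with $u_i \ge \eps k$, and the modified algorithm's expected number of bad evictions is at most that of randomized marking restricted to such steps. Adapting the standard randomized-marking analysis~\cite{fiat1991competitive} so that it stops accruing cost once $u_i$ drops below $\eps k$ bounds this by $O(\ell_h\,(H(k) - H(\eps k))) = O(\ell_h\log(1/\eps))$, giving $\E[M_h] = O(\ell_h\log(1/\eps))$ as desired. The remaining technical care is in making this last step precise---importing the per-step randomized-marking estimate and the cutoff argument for a general number $\ell_h$ of clean pages, and checking that the switch between the random and Belady regimes does not disturb the coupling up to the cutoff point.
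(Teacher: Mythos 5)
Your proposal is correct and follows essentially the same route as the paper: bound queries by $\eps k$ per phase (each phase having at least $k$ requests), charge each stale-page miss to a uniform eviction made while at least $\eps k$ pages are unmarked (yielding the $H(k)-H(\eps k)$ term), and argue that with error-free predictions the Belady-style evictions in the final regime only remove pages not requested again in the phase. Your counting-identity justification for that last step is a nice explicit argument for a fact the paper simply asserts, and the step you defer to ``adapting the standard randomized-marking analysis'' is exactly the per-page computation $\sum_i \ell^{(i)}/(k-i+1)$ that the paper carries out.
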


\begin{proof}
Consider any phase $h$ of the marking algorithm and $\ell_h$ be the number of clean pages in that phase. Let $p_1, \ldots, p_k$ denote the $k$ pages in cache at the beginning of the phase and further suppose that the pages are sorted in order of the arrival time of the first request to a page in this phase (breaking ties arbitrarily). In other words, pages $p_1$, \ldots, $p_{k - \ell_h}$ are the $k-\ell_h$ stale pages requested in this phase and further the first request to page $p_i$ is earlier than that of page $p_j$ for any $i < j$.

Consider any stale page $p_i$ where $1 \leq i \leq k - \eps k$, and let $\ell^{(i)} \leq \ell_h$ be the number of clean pages that have been requested before the first request to page $p_i$. When the first request to page $i$ arrives, there are exactly $k - i + 1$ unmarked stale pages of which $\ell^{(i)}$ pages have been evicted from the cache uniformly at random. Hence, the algorithm incurs a cache miss for page $p_i$ with probability $\frac{\ell^{(i)}}{k-i+1} \leq \frac{\ell_h}{k-i+1}$.

Finally, consider the first request to page $p_{k - \eps k}$. If the algorithm incurs any cache miss after this time, then it queries all the $\eps k$ remaining unmarked pages and evicts the page whose next request is furthest in the future, i.e.,  it evicts a page from the set $\{p_{k - \ell_h},\ldots,p_k\}$ that is not requested in this phase. Thus, for any $i > k-\eps k$, page $p_i$ incurs a cache miss only if it has already been evicted by the time page $p_{k - \eps k}$ is first requested. Thus, any such page $p_i$ incurs a cache miss with probability at most $\min \{1, \frac{\ell_h}{\eps k}\}$.

By the linearity of expectation, summing over all pages $p_i$, the expected number of cache misses incurred by the algorithm for stale pages is at most $\sum_{i=1}^{k-\eps k} \frac{\ell_h}{k-i+1} + \sum_{i=k-\eps k + 1}^{k-\ell_h} \frac{\ell_h}{\eps k} \leq \ell_h + \ell_h(H_k - H_{\eps k}) \leq O(\ell_h (\log\frac{1}{\eps}))$. In addition, the algorithm also incurs $\ell_h$ additional cache misses for the clean pages. Hence, the total expected number of cache misses incurred in phase $h$ is $O(\ell_h \log(\frac{1}{\eps}))$. The desired competitive ratio now follows from Lemma \ref{lem:marker_clean}.

To bound the total number of queries, we observe that the algorithm makes at most $\eps k$ queries in each phase. Since each phase has at least $k$ requests, any request sequence $\Gamma$ has at most $|\Gamma|/k$ phases, and thus the total number of queries is at most $\eps |\Gamma|$.
\end{proof}

\subsection{Handling prediction errors}
    \label{sec:warmup-handle-error}

Let us now consider the case where the oracle can give erroneous predictions. 

Since Algorithm~\ref{alg:naive_eviction} does not utilize predictions as long as there are at least $\epsilon k$ unmarked pages left in the cache, we only need to reconsider the cache misses that occur after there are fewer than $\epsilon k$ unmarked pages left. Consider any page $p_i$ for $i > k - \epsilon k$ and let $\tilde t$ denote the time when the algorithm queries all the remaining unmarked pages. Suppose the algorithm incurs a cache miss on page $p_i$ and evicts page $q = \argmax_{p \in U} \tau_{p,\tilde t}$. Now, if the predictions are correct, then $q$ belongs to the set $\{p_{k-\ell_h}, \ldots, p_k\}$ of pages that are not requested in this phase. However, suppose the predictions are incorrect and page $q$ is requested in this phase, then the algorithm incurs an additional cache miss. However, in this case the pair $(q, \tilde t)$ and $(p_k, \tilde t)$ of queries is an inversion and we can charge the additional cache miss incurred to this inversion. Since we only incur at most one cache miss for a page, it can be easily verified that we charge at most one cache miss to a specific inversion. Let $I_h$ be the total number of inversions for queries made in phase $h$, then from the above discussion we have that the expected number of cache misses incurred by the algorithm in phase $h$ is at most $O(\ell_h (\log \frac{1}{\epsilon}) + \E[I_h])$. Hence, the total cost incurred over all phases is $O( (\sum_h \ell_h)(\log \frac{1}{\epsilon}) + \E[I])$ %
where the expectation is over the randomness in the pages evicted by the algorithm. Using Lemma~\ref{lem:dg} and Lemma~\ref{lem:marker_clean}, we conclude that the total cost incurred by Algorithm~\ref{alg:naive_eviction} is at most $O(2\log(1/\epsilon) \opt + \E[\eta])$ and obtain the following:

\begin{theorem}
\label{thm:warmup-with-errors}
For any $\epsilon > 0$, there is an $O(\log(1/\epsilon) + \E[\eta]/\opt)$-competitive algorithm for caching that makes at most $\epsilon |\Gamma|$ queries.
\end{theorem}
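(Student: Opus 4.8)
The plan is to bootstrap off the proof of Theorem~\ref{thm:warmup} and confine all the damage caused by noisy predictions to a set of ``extra'' cache misses that we can pay for with inversions. The first observation is structural: Algorithm~\ref{alg:naive_eviction} never consults the oracle as long as at least $\eps k$ unmarked pages remain in the current phase, so up to the moment $\tilde t$ at which it first queries the (at most $\eps k$) remaining unmarked pages, it behaves \emph{exactly} as the random marking rule analyzed in Theorem~\ref{thm:warmup}. Hence the $O(\ell_h\log(1/\eps))$ bound on the expected number of misses attributable to those steps in phase $h$ carries over verbatim, independent of prediction quality, and it remains only to control the misses that occur after the within-phase switch to Belady's rule.

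Next I would set up the charging scheme. Fix a phase $h$ and consider a miss after time $\tilde t$ at which the algorithm evicts $q=\argmax_{p\in U}\tau_{p,\tilde t}$. If $q$ is not requested again during phase $h$, the eviction coincides with what Belady would do and creates no miss beyond what the error-free analysis already accounts for. Otherwise $q$ is requested again in the phase, and I would then produce a witness page $q'\in U$ (at time $\tilde t$) that is \emph{not} requested again in phase $h$ and satisfies $\tau_{q',\tilde t}\le\tau_{q,\tilde t}$ --- the existence of such a $q'$ is precisely what makes this miss ``extra'', i.e.\ avoidable had the queried predictions been order-accurate. I charge the extra miss to the inversion $\{(q,\tilde t),(q',\tilde t)\}$, which is genuinely an inversion since $\tau_{q,\tilde t}\ge\tau_{q',\tilde t}$ while $a_{q,\tilde t}<a_{q',\tilde t}$. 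Injectivity of the charging follows from the marking invariant: once a page is (re)fetched it is marked, so it incurs at most one miss per phase; choosing $q'$ canonically (e.g.\ the queried page with the largest actual next-request time) then makes distinct extra misses map to distinct inversion pairs. Consequently the number of extra misses in phase $h$ is at most $I_h$, the number of inversions among the queries issued in phase $h$, and the expected cost of phase $h$ is $O(\ell_h\log(1/\eps))+\E[I_h]$.

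Summing over phases, I would use $\sum_h\ell_h\le 2\,\opt$ from Lemma~\ref{lem:marker_clean} to bound the first term by $O(\log(1/\eps))\cdot\opt$, and $\E[I]=\sum_h\E[I_h]\le 2\,\E[\eta]$ from Lemma~\ref{lem:dg} to bound the second by $O(\E[\eta])$. This yields total expected cost $O(\log(1/\eps))\cdot\opt+O(\E[\eta])$, i.e.\ the claimed competitive ratio $O(\log(1/\eps)+\E[\eta]/\opt)$. The query bound is inherited unchanged from Theorem~\ref{thm:warmup}: at most $\eps k$ queries per phase and at most $|\Gamma|/k$ phases give at most $\eps|\Gamma|$ queries in total.

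The step I expect to be the main obstacle is making the charging argument airtight --- in particular, pinning down exactly when an extra miss occurs and arguing that a valid canonical witness $q'$ exists at that moment (a short case analysis on the composition of $U$ at time $\tilde t$ versus the clean and stale pages still to arrive), and confirming that no inversion is billed twice. Everything else reduces to a direct reuse of the error-free analysis together with the two cited lemmas.
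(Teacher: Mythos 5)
Your proposal follows essentially the same route as the paper: reuse the error-free analysis for all misses before the within-phase switch, charge each post-switch ``extra'' miss (an evicted page that is requested again) to an inversion between that page's query and a canonically chosen queried page that is not requested again in the phase (the paper uses $(p_k,\tilde t)$; you use the queried page with the largest actual next-request time, which is the same idea), establish injectivity via the at-most-one-miss-per-page marking invariant, and finish with Lemmas~\ref{lem:dg} and~\ref{lem:marker_clean}. The witness-existence point you flag as the remaining obstacle is treated at the same level of detail (i.e., asserted) in the paper itself, so your argument is as complete as the published one.
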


While this warm-up result is a proof of concept for parsimonious use of predictions, to achieve a constant competitive ratio, we still need to make a linear number of queries in the request sequence length. To overcome this weakness we propose a new algorithm in the following section that is more adaptive in deciding which pages to query.

\section{Adaptive query algorithm}

The new algorithm queries for $\eb$ unmarked pages uniformly at random per cache miss and evicts the one that is predicted to be requested the furthest in the future. We call this the \emph{adaptive query} algorithm (\AdaptiveQuery-$b$); see  Algorithm~\ref{alg:adaptive_eviction}. Here,  $\eb$ is a parameter that governs a trade-off between the desired competitive ratio and the number of queries we are willing to make per cache miss. 
\begin{algorithm2e}
\SetAlgoVlined
\SetKwFunction{FRecurs}{Evict}%
\SetKwProg{Fn}{Function:}{\string:}{}
\Fn(){\FRecurs{}}
{
\KwData{$U \subseteq {\cal U}$: Set of unmarked pages in cache and an integer $b >0$}
\KwResult{$\alpha$: Page to be evicted}
\BlankLine
$S \leftarrow$ Sample $\eb$ pages from $U$ uniformly at random without replacement \\
Let $\tau_p \leftarrow {\cal Q}(p, t)$ for all $p \in S$\\
$\alpha \leftarrow \argmax_{p \in S} \tau_p $\\
\KwRet{$\alpha$}
}
\caption{Adaptive query eviction.}
\label{alg:adaptive_eviction}
\end{algorithm2e}

As before we first analyze the algorithm assuming the predictions are all correct. We will show the following trade-off in Section~\ref{sec:adaptive-analysis}. 
\begin{theorem}
    \label{thm:caching-adaptive}
    Under the assumption that the oracle is error-free, for any integer $\eb > 0$, the adaptive query algorithm is $2( \log_{\eb+1} k +3)$-competitive and makes at most $2\eb (\log_{\eb+1} k +3) \cdot \opt$ queries in expectation.
\end{theorem}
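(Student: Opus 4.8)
The plan is to run the standard marking/phase machinery from Section~\ref{sec:marking}. Partition $\Gamma$ into phases; by Lemma~\ref{lem:marker_clean} we have $\opt \ge \frac{1}{2}\sum_h \ell_h$, so it suffices to show that within each phase $h$ the adaptive query algorithm incurs at most $(\log_{b+1}k + 3)\,\ell_h$ cache misses in expectation. Summing over phases and invoking Lemma~\ref{lem:marker_clean} again then gives the competitive ratio $2(\log_{b+1}k+3)$. The query bound follows immediately: each cache miss causes at most one forced eviction, which issues at most $b$ queries, so the expected number of queries is at most $b$ times the expected number of misses, i.e.\ at most $2b(\log_{b+1}k+3)\,\opt$.

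Fix a phase $h$. Its $\ell_h$ clean pages are all misses since they are absent from the cache at the start of the phase, so it remains to bound the expected number of \emph{extra} misses, i.e.\ of stale pages that get evicted before being requested during the phase. The key observation — and the only place error-freeness of the oracle is used — is that ``furthest predicted next request'' equals ``furthest actual next request'', so at a forced eviction the algorithm removes the sampled page whose actual next request is latest. Consider a forced eviction at which there are $m$ unmarked pages in the cache, exactly $\phi \le m$ of which will be requested before the phase ends (the \emph{live} pages; the remaining $m-\phi$ are \emph{doomed}, their next request falling in a later phase). If at least one of the $b$ sampled pages is doomed, the algorithm evicts a doomed page, which is not an extra miss; hence the eviction produces an extra miss only if all $b$ sampled pages are live, which has probability $\binom{\phi}{b}/\binom{m}{b} \le (\phi/m)^b$.

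It remains to bound the expectation of $\sum_t (\phi_t/m_t)^b$ over the forced evictions $t$ of the phase. Two structural facts drive this: (i) there are only $\ell_h$ doomed pages, and once a doomed page is evicted it never re-enters the unmarked pool, so $m_t - \phi_t \le \ell_h$ throughout; and (ii) $m_t$ is strictly decreasing in $t$. The plan is to group the evictions by the scale of $m_t$ in base $b+1$ — scale $j$ consisting of the evictions with $m_t \in \big(k(b+1)^{-(j+1)},\, k(b+1)^{-j}\big]$ — so that there are $O(\log_{b+1}k)$ scales, and to argue that each scale contributes $O(\ell_h)$ extra misses in expectation. Within a scale, $(\phi_t/m_t)^b = (1 - D_t/m_t)^b$ with $D_t := m_t - \phi_t \le \ell_h$; each extra miss either evicts one of the at most $\ell_h$ doomed pages (which happens at most $\ell_h$ times over the whole phase) or can be amortized against the shrinking of the pool, the $b$-th power contracting the resulting harmonic-type sum at ratio $\Theta(b+1)$ so that the per-scale total stays $O(\ell_h)$. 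Summing over the $O(\log_{b+1}k)$ scales, plus $O(\ell_h)$ for the final scale where $m_t = O(\ell_h)$ (there are $O(\ell_h)$ evictions there, each a miss with probability at most $1$), yields $O(\ell_h\log_{b+1}k)$ extra misses in expectation; carrying the constants gives the stated $(\log_{b+1}k+3)\,\ell_h$.

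The main obstacle is exactly this last step: early in a phase $\phi_t/m_t$ is close to $1$, so individual evictions are very likely to be extra misses, and one must carefully exploit the interplay between the strict decrease of $m_t$, the cap $m_t-\phi_t\le \ell_h$, and the exponent $b$ to obtain a base-$(b+1)$ logarithm rather than a $\log_2$ — and since $(m_t,\phi_t)$ is a random trajectory this is a genuine amortization, not a term-by-term comparison. An equivalent route to the same sum is a per-page analysis: order the stale candidates $p_1,\dots,p_k$ of the phase by next-request time and bound, for each stale $p_i$, the probability that it is evicted before its first request by $\sum_\tau \binom{w_\tau}{b-1}/\binom{u_\tau}{b}$ over the evictions $\tau$ preceding that request, where $u_\tau$ is the number of unmarked pages then and $w_\tau$ the number of those whose next request precedes $p_i$'s; for $b\ge 2$ this probability is $0$ unless $w_\tau \ge b-1$, which is the source of the improvement over randomized marking. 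In either formulation the probabilistic eviction bound of the second paragraph is the clean part, and the summation is where the care lies.
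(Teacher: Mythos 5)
Your setup (phase decomposition, Lemma~\ref{lem:marker_clean}, and the query count being $b$ times the miss count) matches the paper, and your one clean probabilistic observation is correct: with an error-free oracle the algorithm evicts the sampled page with the latest \emph{actual} next request, so an eviction produces a future miss only if all $b$ sampled pages are live, which happens with probability $\binom{\phi}{b}/\binom{m}{b}$. But the theorem is not proved by this observation; it is proved by the summation you defer, and the amortization you sketch there has a genuine gap. The two structural facts you invoke --- $m_t$ strictly decreasing and $m_t-\phi_t\le \ell_h$ --- are far too weak to give $O(\ell_h)$ evictions per base-$(b+1)$ scale. If the unmarked pool only shrank by one per eviction, then in scale $j$ (where $m_t\approx k/(b+1)^j\gg b\,\ell_h$) each eviction would be live with probability $(1-\ell_h/m_t)^b\approx 1$, and the scale would contribute $\Theta(k\,b/(b+1)^{j+1})$ live evictions, not $O(\ell_h)$. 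The $b$-th power in the live-eviction probability cannot rescue this. What actually makes each scale cheap is that the pool contracts \emph{geometrically} along each eviction chain: when a live page is evicted and later requested, the number of still-unmarked stale pages at that future miss is (in distribution) the minimum of $b$ uniform samples from the current pool, hence at most a $1/(b+1)$ fraction in expectation. That contraction is the engine of the proof, and nothing in your writeup establishes it; your alternative per-page route has the same problem, since the set of evictions $\tau$ preceding a page's request is itself determined by how fast the pool contracts.

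The paper's proof supplies exactly this missing piece. It decomposes the misses of a phase into $\ell_h$ disjoint eviction chains $\chain_i=\langle f_i,q_{i,1},\ldots\rangle$, defines the rank $\rk(p)$ as the number of not-yet-evicted stale pages after $p$ in the first-request ordering, and shows (Lemmas~\ref{lem:uniform-ub} and~\ref{lem:page-rank}) that $\E[\rk(q_{i,j+1})\mid \rk(q_{i,j})]\le \rk(q_{i,j})/(b+1)$, because the evicted page's rank is the minimum of $b$ uniform samples. Iterated expectation, Markov's inequality, and a geometric tail sum then give $\E[M_i]\le \log_{b+1}k+3$ exactly (Lemma~\ref{lem:length-ub}), which is also needed to obtain the explicit constant $2(\log_{b+1}k+3)$ in the statement rather than an unspecified $O(\cdot)$. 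To repair your argument you would need to prove this contraction (or an equivalent per-chain statement) rather than assert that the per-scale total ``stays $O(\ell_h)$''; once you have it, the chain-length bound is cleaner than the scale-based bookkeeping.
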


This bound is shown to be nearly tight in 
 Section~\ref{sec:adaptive-lb}; see Theorem~\ref{thm:lb}. We then extend  Theorem~\ref{thm:caching-adaptive} in Section~\ref{sec:main-handle-error}, so it can handle error-prone predictions.

\subsection{Analysis}
    \label{sec:adaptive-analysis}
 
If we show that the adaptive query algorithm is $c$-competitive, then it immediately follows that the number of queries made is $cb \cdot \opt$. Thus, we only need to establish the desired competitive ratio. 
 
Consider any fixed phase $h$ of the marking algorithm and let $f_1, \ldots, f_{\ell_h}$ be the clean pages requested in that phase. We consider the following notion of eviction chains~\citep{lykouris2018competitive} for the sake of analysis. An \emph{eviction chain} $\chain_i = \langle q_{i,0} := f_i, q_{i,1}, \ldots, q_{i,M_i}\rangle$ is a sequence of pages constructed as follows: $q_{i,1}$ is the stale page that is evicted by the algorithm when it serves the clean page $f_i$; similarly for all $j \geq 1$, $q_{i,j+1}$ is the stale page that gets evicted when the algorithm serves the request to page $q_{i,j}$. Eventually, a stale $q_{i,M_i}$ gets evicted that is not requested in the phase and the sequence ends. We note that each eviction chain starts with a distinct clean page and ends with a stale page that is not requested in the phase. Further, the $\ell_h$ eviction chains are disjoint and each cache miss incurred by the algorithm is encoded in these chains. The $i$th eviction chain $\chain_i$ leads to $M_i$ cache misses where $M_i$ is a random variable. Our goal is to bound the total number of cache misses, i.e., $\E[\sum_{i=1}^{\ell_h} M_i]$.

\paragraph{Page ranks.}
We first order all clean pages and stale pages in the cache by the arrival time of the first request to that page in this phase (the $\ell_h$ stale pages that are not requested in the phase appear last in the ordering, in an arbitrary order).
For each stale page $p$ evicted by the algorithm, we define its \emph{rank} $\rk(p)$ as the number of stale pages after page $p$ in the above ordering that have not yet been evicted (at the time $p$ was evicted). By construction of the eviction chains, page $q_{i,j}$ is evicted when page $q_{i,j-1}$ is requested and hence $q_{i,j}$ is after $q_{i,j-1}$ in the ordering (since all pages before $q_{i,j-1}$ in the ordering have already been marked). Hence, we always have $\rk(q_{i,j}) \leq \rk(q_{i,j-1})$. Similarly, for each clean page $f_i$, we define its rank $\rk(f_i) = \rk(q_{i,0})$ to be the number of stale pages after $f_i$ that have not yet been evicted when $f_i$ was requested. Note that $\rk(f_i) \leq k$, for all $1\leq i \leq \ell_h$.

We first show the following simple lemma that follows from order statistics of the uniform distribution. We defer its proof to the Supplementary Material.
\begin{lemma}
    \label{lem:uniform-ub}
    If $S = \{s_1, \ldots, s_b\}$ is a set sampled uniformly at random without replacement from $\{0, 1, \ldots, \rk\}$, then $\E [\min_{t \in [b]} s_t] \leq \frac{\rk}{b+1}$. 
\end{lemma}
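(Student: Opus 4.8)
The plan is to compute (or bound) the expectation of the minimum of a uniform random subset via the standard tail-sum identity $\E[\min_{t\in[b]} s_t] = \sum_{m\ge 1} \Pr[\min_{t\in[b]} s_t \ge m]$. For each integer $m$ with $1 \le m \le \rk - b + 1$, the event $\{\min_t s_t \ge m\}$ is exactly the event that all $b$ sampled elements lie in $\{m, m+1, \ldots, \rk\}$, a set of size $\rk - m + 1$. Since $S$ is a uniformly random $b$-subset of the $\rk+1$ element set $\{0,1,\ldots,\rk\}$, this probability equals $\binom{\rk-m+1}{b}\big/\binom{\rk+1}{b}$. Hence
\[
\E\Bigl[\min_{t\in[b]} s_t\Bigr] \;=\; \frac{1}{\binom{\rk+1}{b}} \sum_{m=1}^{\rk-b+1} \binom{\rk-m+1}{b} \;=\; \frac{1}{\binom{\rk+1}{b}} \sum_{j=b}^{\rk} \binom{j}{b},
\]
after reindexing $j = \rk - m + 1$.

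Now I would apply the hockey-stick identity $\sum_{j=b}^{\rk} \binom{j}{b} = \binom{\rk+1}{b+1}$, which reduces the expectation to a single ratio of binomial coefficients:
\[
\E\Bigl[\min_{t\in[b]} s_t\Bigr] \;=\; \frac{\binom{\rk+1}{b+1}}{\binom{\rk+1}{b}} \;=\; \frac{\rk+1-b}{b+1} \;\le\; \frac{\rk}{b+1},
\]
where the last step uses $b \ge 1$ so that $\rk + 1 - b \le \rk$. This gives the claimed bound with a little room to spare. (One should also note the degenerate cases: if $b > \rk$ then $S = \{0,\ldots,\rk\}$ is forced, the minimum is $0$, and the bound holds trivially; if $b = \rk$ or the set is otherwise small the same computation goes through with empty-sum conventions.)

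I do not anticipate a genuine obstacle here—this is a routine order-statistics computation. The only thing to be careful about is the bookkeeping: making sure the range of summation in the tail-sum is handled correctly (the term $m = \rk - b + 1$ is the last one for which the event has positive probability), invoking the hockey-stick identity with the correct endpoints, and checking the boundary cases $b \ge \rk$ separately so the binomial coefficients are well-defined. An alternative, slightly slicker route avoids binomials entirely: embed the discrete problem in the continuum by noting that the $b$ order statistics of a uniform sample split $[0,\rk+1]$ into $b+1$ pieces of equal expected length $\tfrac{\rk+1}{b+1}$, so the minimum has expectation $\tfrac{\rk+1}{b+1} - \tfrac{1}{2} \le \tfrac{\rk}{b+1}$ after the standard continuity correction; but the direct hockey-stick computation above is self-contained and is the version I would write up.
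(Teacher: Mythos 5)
Your proof is correct, and it takes a genuinely different route from the paper's. You compute the expectation exactly: the tail-sum identity plus the hockey-stick identity give $\E[\min_t s_t] = \binom{r+1}{b+1}/\binom{r+1}{b} = \frac{r+1-b}{b+1}$, which is in fact slightly stronger than the claimed bound $\frac{r}{b+1}$ (and your handling of the summation range and the degenerate case $b > r$ is fine). The paper instead argues by comparison: it first notes that sampling \emph{with} replacement can only increase the expected minimum, then simulates with-replacement sampling from $\{0,\dots,r\}$ by drawing $b$ continuous uniforms on $[0,r]$ and taking floors, and finally invokes the standard fact that the minimum of $b$ i.i.d. uniforms on $[0,1]$ has expectation $\frac{1}{b+1}$. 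Your closing remark about order statistics splitting the interval into $b+1$ equal-length pieces is essentially the paper's argument. The trade-off: the paper's comparison argument is shorter and avoids binomial bookkeeping, while your exact computation is self-contained, yields the precise value $\frac{r+1-b}{b+1}$, and does not require justifying the stochastic-domination step between without- and with-replacement sampling.
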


The following two lemmas are used to bound the expected length of an eviction chain.

\begin{lemma}
    \label{lem:page-rank}
    Consider any eviction chain $\chain_i$ and suppose it evicts page $q_{i,j+1}$ to service a request to page $q_{i,j}$. Then we have
    $\E[ \rk(q_{i,j+1}) \mid \rk(q_{i,j})] \leq \frac{\rk(q_{i,j})}{\eb +1}$.
\end{lemma}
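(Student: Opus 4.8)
The plan is to condition on the value of $\rk(q_{i,j})$ and analyze the random sampling step that the algorithm performs when it serves the request to page $q_{i,j}$. At that moment, $q_{i,j}$ has just been requested (and marked), and the algorithm needs to evict a page to make room. Let $U$ be the set of unmarked pages in the cache at that instant, and let $r := \rk(q_{i,j})$ be the number of unevicted stale pages lying after $q_{i,j}$ in the global page-rank ordering. The key observation is that \emph{every page $p$ that the algorithm could possibly evict here has rank $\rk(p) \in \{0,1,\ldots,r\}$}: indeed, the chain property established just before the lemma says the evicted page $q_{i,j+1}$ must come after $q_{i,j-1}$ in the ordering, and more to the point, since all pages before $q_{i,j}$ in the ordering are already marked, the unmarked stale pages available for eviction are exactly some subset of the $r$ stale pages sitting after $q_{i,j}$, together with possibly $q_{i,j}$-rank-$0$-type pages; in any case their ranks, counted as ``number of not-yet-evicted stale pages strictly after them,'' take values in $\{0,\ldots,r\}$, and the page occupying the relative position with the smallest such rank has rank $0$. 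First I would make this correspondence precise: the unmarked stale pages available at this step can be listed in ordering-order as having ranks $r > r' > \cdots$; in the cleanest formulation, the multiset of ranks of the candidate pages is a set of $r+1$ distinct values contained in $\{0,1,\ldots,r\}$ — actually exactly $\{0,1,\dots,r\}$ would be the case if no stale page after $q_{i,j}$ has been evicted, and a subset otherwise — so in all cases the ranks of the sampled pages are distinct elements of $\{0,\ldots,r\}$.

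Next, the algorithm samples $b$ of these candidate pages uniformly at random without replacement and evicts the one whose predicted (hence, under the error-free assumption, actual) next request is furthest in the future. Since the page-rank ordering orders pages by the time of their \emph{first} request in the phase, and a stale page not requested again in the phase is ``furthest in the future,'' evicting the furthest-in-future page among the sample is the same as evicting the page with the \emph{largest position} in the ordering among the sampled pages, which among stale candidates is the page of \emph{smallest rank} in the sample. Hence $\rk(q_{i,j+1}) = \min_{p \in S} \rk(p)$ where $S$ is the sampled set of $b$ candidate pages. Since the ranks of the candidate pages form a set of distinct values in $\{0,1,\ldots,r\}$ and $S$ is a uniform size-$b$ subset without replacement, the ranks $\{\rk(p) : p \in S\}$ are a uniform size-$b$ subset without replacement of a set contained in $\{0,1,\ldots,r\}$. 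Applying Lemma~\ref{lem:uniform-ub} (after noting that sampling from a subset of $\{0,\ldots,r\}$ only decreases the expected minimum compared to sampling from all of $\{0,\ldots,r\}$, since pointwise the minimum can only go down when more small values are available — or more carefully, the worst case is exactly $\{0,\ldots,r\}$) gives $\E[\min_{p\in S}\rk(p) \mid \rk(q_{i,j}) = r] \le \frac{r}{b+1}$, which is the claim.

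The main obstacle I anticipate is pinning down the exact combinatorial description of the candidate ranks and justifying the monotonicity that lets us reduce to the ``full'' set $\{0,1,\ldots,r\}$ for the application of Lemma~\ref{lem:uniform-ub}. Two subtleties need care: first, there may be fewer than $b$ unmarked pages available, in which case sampling ``$b$ without replacement'' effectively takes all of them and the argument degenerates gracefully (the bound $\frac{r}{b+1}$ still holds, and in fact is not tight there); second, one must confirm that the clean pages possibly sitting in $U$ do not interfere — a clean page in the cache has not been requested yet in the phase, but it was either fetched during this phase (so it is marked, not in $U$) — so in fact $U$ restricted to pages eligible to be the \emph{next} chain element consists of stale pages after $q_{i,j}$, plus stale pages that are genuinely never requested again (rank $0$); either way the rank values are in $\{0,\ldots,r\}$ and distinct, so Lemma~\ref{lem:uniform-ub} applies cleanly. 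I would also double-check the edge case $\rk(q_{i,j}) = 0$, where the bound trivially gives $0$ and indeed the chain must terminate. The rest is a direct substitution into Lemma~\ref{lem:uniform-ub}.
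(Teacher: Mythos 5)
Your overall approach is the same as the paper's: condition on $\rk(q_{i,j})$, observe that every page preceding $q_{i,j}$ in the first-arrival ordering is already marked so that all eviction candidates lie after $q_{i,j}$, identify the evicted page (under error-free predictions) with the sampled candidate of smallest rank, and invoke Lemma~\ref{lem:uniform-ub}. However, the step by which you reduce to Lemma~\ref{lem:uniform-ub} contains a genuine error. You allow the ranks of the candidates to form an \emph{arbitrary} subset $T$ of $\{0,1,\ldots,r\}$ and then assert that sampling $b$ elements without replacement from $T$ yields an expected minimum no larger than sampling from all of $\{0,\ldots,r\}$, ``since the worst case is exactly $\{0,\ldots,r\}$.'' That monotonicity claim is false: deleting \emph{small} elements from the ground set can only increase the minimum. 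For instance, with $b=2$ and $T=\{r-1,r\}$ the minimum is deterministically $r-1$, which exceeds $r/(b+1)=r/3$ for all $r\ge 2$. So the inequality you need cannot be obtained by comparison with the full set.

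The gap closes once the definition of rank is used correctly: the rank of a page is not its static position in the ordering but the number of \emph{not-yet-evicted} stale pages after it at the moment of its eviction. Every stale page after $q_{i,j}$ in the ordering has its first request strictly in the future and is therefore unmarked, so the eviction candidates are exactly the $m\le \rk(q_{i,j})$ not-yet-evicted stale pages after $q_{i,j}$; if the $j$th of these (in ordering order) were evicted now, its rank would be exactly $m-j$. Hence the candidate ranks are precisely the contiguous set $\{0,1,\ldots,m-1\}$ --- never a gappy subset --- and Lemma~\ref{lem:uniform-ub} applies directly to give $\E[\rk(q_{i,j+1})\mid \rk(q_{i,j})]\le \frac{m-1}{b+1}\le \frac{\rk(q_{i,j})}{b+1}$, which is exactly the paper's argument. (Your side remarks --- that having fewer than $b$ candidates only helps, and that clean pages in the cache are marked and hence not candidates --- are fine.)
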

\begin{proof}
    When a cache miss occurs for page $q_{i,j}$, note that all pages that appear before $q_{i,j}$ (when ordered by the arrival time of their first request in the phase) have already been marked.
    Thus, all the queried stale pages must appear after $q_{i,j}$. Suppose there are $\rk \leq \rk(q_{i,j})$ unmarked stale pages left. When the predictions are all correct, Algorithm~\ref{alg:adaptive_eviction} randomly samples $b$ pages from all unmarked stale pages and evicts $q_{i,j+1}$ as the one that is latest in the ordering. In other words, $\rk(q_{i,j+1})$ is the minimum of $b$ uniform samples from $\{0, 1, \ldots, \rk-1\}$. Thus from Lemma \ref{lem:uniform-ub}, we have $\E[\rk(q_{i,j+1}) \mid \rk(q_{i,j})] \leq \frac{\rk-1}{b+1} \leq \frac{\rk(q_{i,j})}{b+1}$.
\end{proof}

\begin{lemma}
    \label{lem:length-ub}
    For every $1 \leq i \leq \ell_h$, we have $\E[M_i] \leq \log_{b+1} k + 3$ where $M_i$ is the length of the eviction chain beginning with the clean page $f_i$.
\end{lemma}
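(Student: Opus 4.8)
The plan is to track the rank of the successive pages along the eviction chain and show that it contracts geometrically in expectation, so the chain cannot be long. Fix $i$ and write $\chain_i = \langle q_{i,0}=f_i, q_{i,1}, \ldots, q_{i,M_i}\rangle$; set $R_j := \rk(q_{i,j})$ for $0 \le j \le M_i$ and $R_j := 0$ for $j > M_i$. By definition $R_0 = \rk(f_i) \le k$. Combining Lemma~\ref{lem:page-rank} (applied on the event that the chain survives to step $j$) with the trivial bound $R_{j+1}=0$ when it does not, I would get $\E[R_{j+1}\mid\mathcal{F}_j] \le R_j/(\eb+1)$, where $\mathcal{F}_j$ denotes the randomness used up to and including the eviction that produced $q_{i,j}$; iterating the tower rule then gives $\E[R_j]\le k/(\eb+1)^{j}$ for every $j\ge 0$.

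The crux is the combinatorial claim: if the chain survives past $q_{i,j}$, i.e.\ $M_i\ge j+1$, then $R_j\ge 1$. To see this, recall that at any point in a phase the marked pages are exactly those already requested in that phase, so the algorithm can only ever evict stale pages that have not yet been requested. In particular, when the (first) request to $q_{i,j}$ triggers the eviction of $q_{i,j+1}$, the page $q_{i,j+1}$ is a stale page that occurs after $q_{i,j}$ in the rank ordering and that has not been requested up to that moment; being stale and never requested, it was never re-fetched, hence it has sat in the cache continuously since the start of the phase, and in particular it was still in the cache at the earlier moment when $q_{i,j}$ itself was evicted. So at that earlier moment there was at least one not-yet-evicted stale page after $q_{i,j}$, i.e.\ $\rk(q_{i,j})=R_j\ge 1$. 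Since $R_j$ is a nonnegative integer, Markov's inequality then yields $\Pr[M_i\ge j+1]\le\Pr[R_j\ge 1]\le\min\{1,\,k/(\eb+1)^{j}\}$.

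It then remains to sum the tail probabilities: $\E[M_i]=\sum_{j\ge 1}\Pr[M_i\ge j]=\sum_{j\ge 0}\Pr[M_i\ge j+1]\le\sum_{j\ge 0}\min\{1,\,k/(\eb+1)^{j}\}$, bounding the first $\lceil\log_{\eb+1}k\rceil$ summands by $1$ and the geometric tail $\sum_{j\ge\lceil\log_{\eb+1}k\rceil}k/(\eb+1)^{j}$ by $\frac{\eb+1}{\eb}\le 2$ (using $\eb\ge 1$), which gives $\E[M_i]\le\lceil\log_{\eb+1}k\rceil+2\le\log_{\eb+1}k+3$. The genuinely delicate step is the combinatorial claim in the second paragraph, which is exactly where the specifics of the marking dynamics enter (that evictions remove only unmarked stale pages, and that an unmarked stale page can never have been evicted since the start of the phase); the rest is the order-statistics estimate already granted by Lemma~\ref{lem:page-rank} together with a routine supermartingale/Markov computation. (All of this is under the error-free assumption of Theorem~\ref{thm:caching-adaptive}.)
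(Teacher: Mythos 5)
Your proposal is correct and follows essentially the same route as the paper's proof: geometric contraction of the rank via Lemma~\ref{lem:page-rank} and the tower rule, the observation that $M_i > j$ forces $\rk(q_{i,j}) \ge 1$, Markov's inequality, and a tail sum split at $\lceil \log_{b+1} k\rceil$. Your justification of the combinatorial step (that the chain surviving past $q_{i,j}$ implies $\rk(q_{i,j})\ge 1$) is in fact spelled out more carefully than in the paper.
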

\begin{proof}
 Note that an eviction chain ends when it evicts one of the $\ell_h$ stale pages that are not requested in the phase. Fix a particular chain $\chain_i$ and for brevity, let $\rk_j := \rk(q_{i,j})$. Since we have $r_0 \leq k$, using Lemma~\ref{lem:page-rank} and the law of iterated expectation, we have
 $\E[\rk_j] \leq \frac{k}{(b+1)^j}$. By Markov's inequality, we have $\Pr[\rk_j \geq 1] \leq \frac{k}{(b+1)^j}$. Note that if $M_i > j$, then it must be the case that $\rk_j \geq 1$. This is because if $\rk_j = 0$, then $q_{i, j}$ will not be evicted and the chain $C_i$ must have length $j$. 
 
 Let $c := \lceil \log_{b+1} (k) \rceil$. We can now bound the expected length of the chain as follows.
 \begin{align*}
     \E[M_i] &= \sum_{j = 0}^{c-1} \Pr[M_i \geq j] 
      + \sum_{j \geq c} \Pr[M_i \geq j] \\
     &= c + \sum_{j \geq 0} \Pr[M_i > c+j] 
     = c +  \sum_{j \geq 0} \Pr[ \rk_{c+j} \geq 1] \\
     &\leq c + \sum_{j \geq 0} \frac{k}{(b+1)^{c+j}} = c + \sum_{j \geq 0} \frac{1}{(b+1)^{j}}  \\
     &\leq \log_{b+1} (k) + 3. \qedhere
 \end{align*}
\end{proof}

\begin{proof}[Proof of Theorem \ref{thm:caching-adaptive}]
Fix a phase $h$ of the marking algorithm. Since every cache miss incurred by the algorithm is recorded in exactly one eviction chain, the expected total number of cache misses incurred by the algorithm in this phase is $\E[\sum_{i=1}^{\ell_h} M_i]$. Using Lemma \ref{lem:length-ub}, this is at most $\ell_h(\log_{b+1} k + 3)$. The desired competitive ratio now follows from Lemma \ref{lem:marker_clean}. 
Further, the algorithm makes at most $b$ queries for each cache miss it incurs and thus we have Theorem \ref{thm:caching-adaptive}.
\end{proof}

\subsection{Handling prediction errors}
\label{sec:main-handle-error}

In this section we extend the analysis to allow for oracles that make erroneous predictions. In this scenario, since we evict a page that is only \emph{predicted} to arrive furthest in the future (and not actually be the one to arrive the latest), Lemma~\ref{lem:page-rank} fails. However, as we show below, in this case the oracle has a large error and we can bound the expected cost of the algorithm in terms of the prediction error.

We first show the following technical statement that relates the rank of the page evicted by the algorithm and the rank of the page that actually arrives the furthest in the future from among the sampled pages (while processing any cache miss).

\begin{lemma}
\label{lem:page-rank-error}
Let $S$ be any set of $b$ pages and let $a_1 < \dots < a_b$ denote their actual next arrival times and let $\langle \tau_1, \dots, \tau_b\rangle$ be the sequence of their predicted arrival times. Let $\eta_S = \sum_{i=1}^b |a_i - \tau_i|$ be $\ell_1$-error of the predictions for the set $S$. If $\hat{b} = \argmax_\alpha \tau_\alpha$ is the page with the furthest predicted arrival time, then we have
\[ \rk(\hat b) \leq \rk(b) + \eta_S.\]
\end{lemma}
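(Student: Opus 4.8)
The claim compares two quantities: $\rk(\hat b)$, the rank of the page the algorithm actually evicts (the one with furthest \emph{predicted} arrival), and $\rk(b)$, where I am reading "$b$" here as the page in $S$ with the furthest \emph{actual} arrival time $a_b$ — call it $p_b$. Recall that the rank of a page is the number of not-yet-evicted stale pages that appear after it in the phase-ordering, and crucially this ordering by first-request-time in the phase coincides with the ordering by actual next arrival time $a_i$ for the pages currently unmarked (since a page's "rank position" among stale pages is exactly determined by when it is next requested). So $\rk(p)$ is, up to the fixed offset coming from already-evicted pages, a monotone function of $a_p$. In particular, among the sampled set $S$, the page $p_b$ with the largest $a_i$ has the largest rank, and $\hat b$ has some rank $\rk(\hat b) \le \rk(p_b)$ trivially — so the content of the lemma is really that the \emph{gap} $\rk(p_b) - \rk(\hat b)$ is at most $\eta_S$. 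Wait — the inequality as written is $\rk(\hat b)\le \rk(b)+\eta_S$, which is in the "easy" direction if $\rk(\hat b)\le \rk(b)$ always. Let me reconsider: the point must be that $\hat b$ need \emph{not} be $p_b$, and the ordering relevant to rank is the \emph{actual}-arrival ordering, so we could have $\rk(\hat b) > \rk(p_b)$ when $\hat b$ arrives later in reality than $p_b$ does — but $p_b$ was \emph{defined} as the one with largest $a_i$, contradiction. So either the indexing in the statement is subtle, or "$b$" denotes the page the error-free algorithm \emph{would} have evicted in a coupled run. I will take the latter reading in the proof: $b$ is the page of furthest actual arrival among the \emph{same sample} $S$, and show $\rk(\hat b)\le \rk(b)+\eta_S$ holds as a genuine (non-trivial) statement once we measure ranks correctly.

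**Key steps.** First, I would fix the cache state at the moment of this cache miss, let $T$ be the set of unmarked stale pages not yet evicted, and observe that for any $p\in T$, $\rk(p) = |\{q\in T : a_q > a_p\}|$, i.e. rank is exactly the number of strictly-later-arriving survivors. Second, I would write $\rk(\hat b) - \rk(p_b) = |\{q\in T: a_q > a_{\hat b}\}| - |\{q\in T: a_q > a_{p_b}\}|$; since $a_{p_b}\ge a_{\hat b}$ (as $p_b$ has the max actual arrival in $S\subseteq T$... ), this difference is $|\{q\in T: a_{\hat b} < a_q \le a_{p_b}\}| \ge 0$, so in fact $\rk(\hat b)\ge \rk(p_b)$, and I need an \emph{upper} bound on this overcount. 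Third — the crux — I bound $|\{q\in T: a_{\hat b} < a_q \le a_{p_b}\}|$ by the prediction error. Because $\hat b$ maximizes $\tau$, we have $\tau_{\hat b}\ge \tau_{p_b}$. Combined with $a_{\hat b} < a_{p_b}$ (strictness when $\hat b\ne p_b$), the pair $(\hat b, p_b)$ is an inversion, and more quantitatively $|a_{\hat b}-\tau_{\hat b}| + |a_{p_b}-\tau_{p_b}| \ge |a_{p_b}-a_{\hat b}|$ — no wait, that bounds the time-gap, not the count. So instead I would argue: each surviving page $q$ with $a_{\hat b}<a_q\le a_{p_b}$, together with $\hat b$, forms an inversion-like pair for which we can extract at least one unit of error. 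Specifically, $\tau_{\hat b}\ge \tau_q$ would make $(\hat b,q)$ an inversion; if instead $\tau_{\hat b} < \tau_q$ then since $\tau_{\hat b}\ge\tau_{p_b}$ we get $\tau_q > \tau_{p_b}$ while $a_q \le a_{p_b}$, so $(q,p_b)$ is an inversion. Either way every such $q$ contributes to a distinct inversion, and I would then relate the count of these inversions (all involving distinct middle pages $q$) to $\eta_S$ via a Diaconis–Graham–type bound restricted to $S$, or more directly show $\eta_S \ge |a_{p_b}-\tau_{\hat b}|+\dots \ge$ (number of integer arrival slots between $a_{\hat b}$ and $a_{p_b}$), using that actual arrival times are distinct integers so at most $\eta_S$ of them can lie in an interval of length forced small by the error.

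**Main obstacle.** The delicate point is translating a \emph{temporal} error $\eta_S = \sum|a_i-\tau_i|$ into a \emph{combinatorial} count of pages sandwiched in rank. The cleanest route, which I'd pursue: since $\tau_{\hat b}\ge \tau_{p_b}$ and $a_{\hat b}< a_{p_b}$, the misranking "distance" $a_{p_b}-a_{\hat b}$ is at most $|a_{\hat b}-\tau_{\hat b}| + |a_{p_b}-\tau_{p_b}| \le \eta_S$; and because every page $q\in T$ with $a_{\hat b}<a_q\le a_{p_b}$ has a \emph{distinct} actual arrival time which is a positive integer lying in the interval $(a_{\hat b}, a_{p_b}]$ of length $\le \eta_S$, there are at most $\eta_S$ such pages. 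Hence $\rk(\hat b) = \rk(p_b) + |\{q\in T: a_{\hat b}<a_q\le a_{p_b}\}| \le \rk(p_b) + \eta_S$, which is the claimed inequality with $b := p_b$. I would double-check the boundary/tie conventions (whether arrival times are assumed distinct, whether $\hat b=p_b$ gives the trivial case $\eta_S\ge 0$) and confirm that the "rank" offset from already-evicted pages cancels identically between $\hat b$ and $p_b$ since it is computed at the same instant.
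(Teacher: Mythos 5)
Your final argument (the ``Main obstacle'' paragraph) is correct and is essentially the paper's proof: bound $\rk(\hat b)-\rk(b)$ by the number of surviving pages with arrival times in $(a_{\hat b}, a_b]$, hence by $a_b - a_{\hat b}$, and then show $a_b - a_{\hat b} \le |a_{\hat b}-\tau_{\hat b}| + |a_b - \tau_b| \le \eta_S$ using $\tau_{\hat b}\ge\tau_b$ and $a_{\hat b}<a_b$ (the paper does this last step by a three-case analysis, you by the triangle inequality, which amounts to the same thing). The earlier detours about inversion counting and Diaconis--Graham are unnecessary and can be dropped.
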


\begin{proof}
We assume that $\hat b \neq b$ since otherwise the lemma is trivial. By definition of $\hat b$ we have $\tau_{\hat b} \geq \tau_b$ and $a_{\hat b} < a_b$. For convenience let $\hat p$ and $p$ denote the corresponding pages. 
Since the number of unmarked pages between $\hat p$ and $p$, when ordered by the request time of their first request, is at most $a_{b} - a_{\hat b}$, by definition of rank we have \[\rk(\hat b) \leq \rk(b) + a_b - a_{\hat b}.\]
We now show $\eta_S \geq a_b - a_{\hat b}$, which will complete the proof. To show this, we observe that $\eta_S \geq |a_{\hat b} - \tau_{\hat b}| + |a_b - \tau_b|$. We consider three cases.
    
    \emph{Case 1: $\tau_{\hat b} \leq a_{\hat b}$.} %
    In this case we have $\eta_S \geq |a_b - \tau_b| = (a_b - a_{\hat b}) + (a_{\hat b} - \tau_{b}) \geq a_b - a_{\hat b}$.
    
    \emph{Case 2: $a_b \geq \tau_{\hat b} > a_{\hat b}$.} Since $\tau_b \leq \tau_{\hat b}$, we have 
$\eta_S \geq |a_{\hat b} - \tau_{\hat b}| + |a_b - \tau_b| =  \tau_{\hat b} - a_{\hat b} + a_b - \tau_b \geq a_b - a_{\hat b}$.

    \emph{Case 3: $\tau_{\hat b} > a_b$.} Here we have $|a_{\hat b} - \tau_{\hat b}| = \tau_{\hat b} - a_{\hat b} \geq a_b - a_{\hat b}$.
\end{proof}

Lemma \ref{lem:page-rank-error} lets us  prove the following analog of Lemma \ref{lem:length-ub}.
\begin{lemma}
    \label{lem:length-ub-error}
    For every $1 \leq i \leq \ell_h$, we have $\E[M_i] \leq \log_{b+1} k + 3 + 2\E[\eta_{S_i}]$ where $M_i$ is the length of the eviction chain beginning with the clean page $f_i$ and $S_i$ is the set of pages queried when pages on path $P_i$ are evicted. 
\end{lemma}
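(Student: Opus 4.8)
The plan is to mirror the proof of Lemma~\ref{lem:length-ub} but with Lemma~\ref{lem:page-rank} replaced by Lemma~\ref{lem:page-rank-error}. Fix a chain $\chain_i$ and write $\rk_j := \rk(q_{i,j})$. When a cache miss on $q_{i,j}$ is served, the algorithm samples $b$ unmarked stale pages. Let $b^\star$ be the sampled page whose \emph{actual} next arrival is furthest in the future; by the same uniform-order-statistics argument as in Lemma~\ref{lem:page-rank}, $\E[\rk(b^\star) \mid \rk_j] \leq \frac{\rk_j}{b+1}$. But the algorithm evicts $\hat b = q_{i,j+1}$, the page with the furthest \emph{predicted} arrival, so Lemma~\ref{lem:page-rank-error} gives $\rk_{j+1} = \rk(\hat b) \leq \rk(b^\star) + \eta_{S_{i,j}}$, where $S_{i,j}$ is the $b$-set queried at this step. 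Combining, $\E[\rk_{j+1} \mid \rk_j] \leq \frac{\rk_j}{b+1} + \E[\eta_{S_{i,j}} \mid \rk_j]$.

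Next I would unroll this recursion. Writing $\eta_{S_i} = \sum_j \eta_{S_{i,j}}$ for the total error over all queries made along the chain, iterated expectation yields a bound of the form $\E[\rk_j] \leq \frac{k}{(b+1)^j} + \E\!\big[\sum_{t<j} \eta_{S_{i,t}}\big] \leq \frac{k}{(b+1)^j} + \E[\eta_{S_i}]$, since the prefix sums of the (nonnegative) per-step errors are bounded by the total. Then, exactly as before, $M_i > j$ forces $\rk_j \geq 1$, so Markov gives $\Pr[M_i > j] \leq \frac{k}{(b+1)^j} + \E[\eta_{S_i}]$. This last bound is only useful when $\frac{k}{(b+1)^j}$ is small, i.e.\ for $j \geq c := \lceil \log_{b+1} k\rceil$; for smaller $j$ I just use $\Pr[M_i > j] \leq 1$.

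The final computation splits $\E[M_i] = \sum_{j\geq 0}\Pr[M_i > j]$ into $j < c$ and $j \geq c$. The first part contributes at most $c \leq \log_{b+1}k + 1$. For the tail, I would \emph{not} simply plug in $\Pr[M_i>c+j]\le \frac{1}{(b+1)^j}+\E[\eta_{S_i}]$ naively, since summing the $\E[\eta_{S_i}]$ term over infinitely many $j$ diverges. Instead observe that once $j$ exceeds $c$, the chain can only survive because of accumulated error: more carefully, $M_i > c + j$ still requires $\rk_{c+j}\ge 1$, and since $\rk_{c+j}$ is integer-valued and nonnegative, $\Pr[\rk_{c+j}\ge 1] \le \E[\rk_{c+j}] \le \frac{1}{(b+1)^j} + \E[\eta_{S_i}]$; but additionally each unit of chain length beyond the ``error-free'' regime can be charged to the error. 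The clean way is: by Markov, $\Pr[M_i \geq c + 1 + m] = \Pr[\rk_{c+m} \ge 1]$, and we also have the deterministic bound that the total ``excess'' length $\sum_{j} \mathbbm{1}[M_i > c+j]$ beyond the geometric tail is dominated by $\eta_{S_i}$ up to the $\sum (b+1)^{-j}$ geometric slack — formally, $\E[M_i] \le c + \sum_{j\ge 0}\frac{1}{(b+1)^j} + 2\E[\eta_{S_i}] \le \log_{b+1}k + 3 + 2\E[\eta_{S_i}]$, where the factor $2$ absorbs both the direct charging of each erroneous eviction and the contribution of the $\E[\eta_{S_i}]$ terms in the Markov bounds (using that a chain of excess length $\ge 1$ already witnesses error $\ge 1$ by Lemma~\ref{lem:page-rank-error}, so these are not double-counted, only summed).

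The main obstacle is precisely this last point: the additive-error term $\E[\eta_{S_i}]$ appears inside every tail probability $\Pr[M_i > j]$, and a careless union over all $j$ makes the sum diverge. The fix is to argue that a long chain is itself a certificate of large error — every step beyond the point where $\rk$ would have decayed geometrically must, via Lemma~\ref{lem:page-rank-error}, be ``paid for'' by at least one unit of $\eta_{S_i}$ — so the excess length is deterministically at most $\eta_{S_i}$ plus a geometric remainder, and only a single (at most constant) factor of $\E[\eta_{S_i}]$ survives. I would make this rigorous by defining a potential $\Phi_j = \rk_j + \sum_{t \ge j}$-type residual-error term that is a supermartingale up to a $(b+1)^{-1}$-contraction, so that optional stopping at $j = M_i$ directly yields $\E[M_i] \le \log_{b+1}k + 3 + 2\E[\eta_{S_i}]$.
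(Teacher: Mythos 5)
Your first paragraph is exactly right and matches the paper: per step, let $q^*_{i,j+1}$ be the sampled page with the actual furthest arrival, apply Lemma~\ref{lem:page-rank} to get $\E[\rk(q^*_{i,j+1})\mid\rk_j]\le\frac{\rk_j}{b+1}$, and apply Lemma~\ref{lem:page-rank-error} to get $\rk_{j+1}\le\rk(q^*_{i,j+1})+\eta_{S_{i,j}}$, yielding $\E[\rk_{j+1}\mid\rk_j]\le\frac{\rk_j}{b+1}+\E[\eta_{S_{i,j}}\mid\rk_j]$. The gap appears in your unrolling step. When you iterate this recursion, each per-step error term inherits a geometric discount: the correct unrolled bound is
\[
\E[\rk_j]\;\le\;\frac{k}{(b+1)^j}\;+\;\sum_{j'=1}^{j}\frac{\E[\eta_{i,j'}]}{(b+1)^{\,j-j'}},
\]
because the error injected at step $j'$ is contracted by a factor $(b+1)^{-1}$ at every subsequent step. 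You instead collapsed the discounted sum to the undiscounted total $\E[\eta_{S_i}]$ before summing over $j$, which destroys exactly the structure needed to make the tail sum converge. With the discounted form retained, the final computation is routine: summing $\Pr[M_i>c+j]$ over $j\ge 0$, each $\E[\eta_{i,j'}]$ appears with total weight $\sum_{j\ge j'}(b+1)^{-(j-j')}=\frac{b+1}{b}\le 2$, giving precisely the $2\E[\eta_{S_i}]$ term in the statement. This is what the paper does, and no charging argument or potential function is needed.

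Your attempted repair does not close the gap. The claim that ``a chain of excess length $\ge 1$ already witnesses error $\ge 1$'' is false: even with zero prediction error the chain exceeds length $c$ with positive probability (that is what the geometric tail $\sum_j(b+1)^{-j}$ accounts for), so the excess length is not deterministically dominated by $\eta_{S_i}$ plus a remainder in any sense you have established. The supermartingale/potential-function sketch in your last paragraph is left undefined and, if carried out, would simply reproduce the discounted unrolling above; as written it is not a proof. To fix the argument, keep the per-step errors $\eta_{i,j'}$ separate with their $(b+1)^{-(j-j')}$ weights through the Markov bound and the final summation, and only at the very end use $\sum_{j'}\E[\eta_{i,j'}]=\E[\eta_{S_i}]$ (the queries made at distinct evictions along the chain are distinct).
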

\begin{proof}
  Fix a particular chain $\chain_i$. Let $\rk(q_{i,j})$ be the number of stale unmarked pages left in the cache when the algorithm incurs a cache miss for page $q_{i,j}$ at some time $t$. In this case, we sample a set $S$ of $b$ of those pages uniformly at random, and set $q_{i,j+1} = \argmax_{p \in S} {\cal Q}(p, t)$. Let $q^*_{i,j+1} = \argmax_{p \in S} a_{p,t}$ be the sampled page that actually arrives furthest in the future. Then by Lemma \ref{lem:page-rank}, we have $\E[\rk(q^*_{i, j+1}) \mid \rk(q_{i, j})] \leq \frac{\rk(q_{i,j})}{b+1}$. Further, for any queried set $S$ of pages, by Lemma~\ref{lem:page-rank-error} we have, $\rk(q_{i,j+1}) \leq \rk(q^*_{i,j+1}) + \eta_S$ where $\eta_S$ is the $\ell_1$-error of the predictions for the set $S$. Thus we obtain the following where $\eta_{i,j+1}$ is defined to be the prediction error of the oracle for pages queried while evicting page $q_{i,j+1}$.
  \begin{align*}
      \E[\rk(q_{i,j+1}) \mid \rk(q_{i,j})] &\leq \frac{\rk(q_{i,j})}{b+1} + \E[\eta_{i,j+1}].
      \intertext{Now, since we have $r(q_{i,0}) \leq k$, using the law of iterated expectation we have}
      \E[\rk(q_{i,j})] \leq \frac{k}{(b+1)^j} &+ \sum_{j' = 1}^{j}\frac{\E[\eta_{i,j'}]}{(b+1)^{j-j'}}.
      \intertext{Finally, using Markov's inequality, we have}
      \Pr[M_i > j] \leq \Pr[\rk(q_{i,j}) \geq 1] &\leq \frac{k}{(b+1)^j} + \sum_{j' = 1}^{j}\frac{\E[\eta_{i,j'}]}{(b+1)^{j-j'}}.
    \end{align*}
      As earlier, let $c = \lceil \log_{b+1}(k)\rceil$. We can now bound the expected length of the chain.
      \begin{align*}
      \E[M_i] &= c + \sum_{j \geq 0} \Pr[M_i > c+j]\\
      &\leq c + \sum_{j \geq 0} \left( \frac{k}{(b+1)^{c+j}} + \sum_{j' = 1}^{c+j}\frac{\E[\eta_{i,j'}]}{(b+1)^{c+j-j'}} \right)\\
      &\leq c + \sum_{j \geq 0} \frac{1}{(b+1)^j} + \sum_{j' \geq 1} \E[\eta_{i,j'}] \sum_{j \geq j'} \frac{1}{(b+1)^{j - j'}}\\
      &\leq \log_{b+1}(k) + 3 + 2\sum_{j' \geq 1} \E[\eta_{i,j'}].
      \end{align*}
      Finally, since the algorithm makes a distinct set of queries when evicting any page, we have %
      $\sum_{j'\geq 1} \E[\eta_{i,j'}] = \E[\eta_{S_i}]$ and the lemma follows.
\end{proof}

\subsection{Adding worst-case guarantees}
In this section we show how a simple modification to the algorithm allows us to obtain an $O(\log k)$-competitive ratio even when the prediction error is arbitrarily large. 
In order to obtain this worst-case guarantee, we make the following modification: when processing a cache miss for the $j$th page ($q_{i,j}$) on chain $\chain_i$, if $j > \log k$, then the algorithm switches to evict an unmarked stale page uniformly at random (as opposed to querying $b$ pages and evicting the one with the furthest predicted arrival). 

\begin{theorem}
\label{thm:main-with-errors}
For any integer $b > 0$, there is an $O(\min\{\log_{\eb+1} k + \E[\eta]/\opt, \log k\})$-competitive algorithm for caching that makes at most $b$ queries per cache miss.
\end{theorem}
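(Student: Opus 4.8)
The plan is to combine the error-dependent bound of Lemma~\ref{lem:length-ub-error} with the worst-case guarantee of the randomized marking algorithm, using the switching rule as the bridge. Fix a phase $h$ and an eviction chain $\chain_i$ starting at the clean page $f_i$. The modification caps the ``predictive'' part of every chain at length $\log k$: for the first $\min\{M_i, \lfloor\log k\rfloor\}$ steps the algorithm behaves as \AdaptiveQuery-$b$, and once a chain reaches length $\log k$ it reverts to evicting a uniformly random unmarked stale page, exactly as in the randomized marking algorithm. So I would write $M_i = M_i' + M_i''$, where $M_i'$ counts the cache misses on the chain before the switch (at most $\log k$ of them) and $M_i''$ counts those after. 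For $M_i'$: truncating the chain only removes steps, so running the argument of Lemma~\ref{lem:length-ub-error} on the (possibly truncated) prefix gives $\E[M_i'] \le \log_{b+1} k + 3 + 2\E[\eta_{S_i}]$, since the prefix length never exceeds what the untruncated analysis bounds and the error terms are a subset of those counted there. Summing over the $\ell_h$ chains in the phase and over all phases, and using that the $S_i$ are disjoint query sets together with Lemma~\ref{lem:dg} and Lemma~\ref{lem:marker_clean}, the contribution of the $M_i'$ part is $O(\log_{b+1} k \cdot \opt + \E[\eta])$.

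For the $M_i''$ part I would argue that the post-switch behavior is precisely that of the randomized marking algorithm, so the tail misses are controlled by the classical $O(H(k))$ analysis. More carefully: after the switch, each further eviction picks a uniformly random unmarked stale page, and the standard charging argument (Lemma~\ref{lem:marker_clean} together with the fact that a stale page with $r$ unmarked stale pages "after" it is evicted-then-re-requested with probability at most $\ell_h/r$, summed as a harmonic series) shows the total number of post-switch misses across the whole phase is $O(\ell_h \log k)$, independent of the predictions. Summing over phases and invoking Lemma~\ref{lem:marker_clean} again bounds this by $O(\log k \cdot \opt)$. Adding the two parts yields $\E[\cost] = O(\log_{b+1} k \cdot \opt + \E[\eta])$ from the $M_i'$ bound and $\E[\cost] = O(\log k \cdot \opt)$ always from the combination, and since both bounds hold simultaneously we get $O(\min\{\log_{b+1} k + \E[\eta]/\opt,\ \log k\})$. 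The query count is at most $b$ per cache miss by construction, since even on the capped chains each miss triggers at most one sample of $b$ pages.

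The main obstacle I anticipate is making the decomposition $M_i = M_i' + M_i''$ rigorous when the two regimes interact through a shared pool of unmarked stale pages: the rank of the page at which a chain switches is itself a random variable coupled to the predictive phase, so I need to be careful that conditioning on reaching length $\log k$ does not corrupt the uniform-random structure the post-switch analysis relies on. The cleanest fix is to observe that the marking-style charging argument (Lemma~\ref{lem:marker_clean}'s proof) bounds the total post-switch misses \emph{per phase} as $O(\ell_h \log k)$ regardless of which pages were evicted in the predictive phase — it only uses that at most $\ell_h$ clean pages arrive and that post-switch evictions are uniform among the then-unmarked stale pages — so no delicate coupling is needed; the two analyses can be run on disjoint "budgets" of the phase. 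A secondary subtlety is confirming that $\log k$ (rather than, say, $\log_{b+1} k$) is the right cap: it must be large enough that the predictive analysis's $\log_{b+1} k + 3$ expected length is not distorted by truncation (true since $\log_{b+1} k \le \log k$ for $b \ge 1$) yet small enough that the tail cost is $O(\log k \cdot \opt)$; both hold, so the bound goes through.
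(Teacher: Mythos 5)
Your overall strategy is the same as the paper's: cap the predictive part of each eviction chain at length $\log k$, bound the pre-switch prefix via Lemma~\ref{lem:length-ub-error}, and bound the post-switch tail via the randomized-marking analysis. However, your final accounting has a genuine gap. You bound $\E[M_i']$ by $\log_{b+1}k+3+2\E[\eta_{S_i}]$ and $\sum_i\E[M_i'']$ by $O(\ell_h\log k)$ ``independent of the predictions,'' and then claim that ``$\E[\cost]=O(\log_{b+1}k\cdot\opt+\E[\eta])$ from the $M_i'$ bound'' so that both branches of the min hold simultaneously. That inference is not valid: the total cost is $\sum_i(M_i'+M_i'')$, and your two bounds cover \emph{different} pieces of it, so adding them only yields $O(\log k\cdot\opt)$ overall. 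To obtain the first branch of the min you must also show that the post-switch cost is small when the error is small --- otherwise the algorithm could pay $\Theta(\log k\cdot\opt)$ on the tails even with perfect predictions, contradicting the claimed $O(\log_{b+1}k)$ bound.

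The missing ingredient is that switching is a \emph{rare event} controlled by the predictions: $\E[M_i'']\le\Pr[\text{chain reaches length }\lceil\log k\rceil]\cdot O(\log k)$, and the Markov-inequality bound inside the proof of Lemma~\ref{lem:length-ub-error} gives $\Pr[M_i>\lceil\log k\rceil]\le\Pr[\rk(q_{i,\lceil\log k\rceil})\ge1]\le k/(b+1)^{\lceil\log k\rceil}+\sum_{j'}\E[\eta_{i,j'}]/(b+1)^{\lceil\log k\rceil-j'}$, which vanishes (for the error-free part) and is controlled by $\E[\eta_{S_i}]$ otherwise. This is exactly what the paper's per-chain statement $O(1)\cdot\min\{\log_{b+1}k+3+2\E[\eta_{S_i}],\,2\log k\}$ is implicitly relying on when it says the chain length ``increases by at most $O(\log k)$'' after the switch. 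Your instinct to avoid the coupling between the two regimes by running the marking argument on a separate per-phase budget is precisely what loses the error-dependence of the tail; you need to keep the conditioning on the switch event rather than discard it.
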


\begin{proof}
When pages are evicted according to Algorithm \ref{alg:adaptive_eviction}, Lemma \ref{lem:length-ub-error} shows that the expected length of any eviction chain is at most $O(\log_{b+1}k + \E[\eta_{S_i}])$. We consider the modified algorithm that switches to evicting a uniformly random unmarked page once the chain length exceeds $\log k$. Following the traditional analysis of the randomized marking algorithm~\cite{fiat1991competitive,lykouris2018competitive}, we observe that once the algorithm switches to random evictions, the length of the chain increases by at most $O(\log k)$ in expectation. Consequently, the modified algorithm incurs at most $O(1) \cdot \min\{\log_{b+1}k + 3 + 2\E[\eta_{S_i}], 2\log k\}$ cache misses in expectation on each eviction chain.

Summing over all clean pages seen in the phase, the expected number of cache misses incurred in any phase $h$ is at most $O(\min\{\log_{b+1}k + \E[\eta_h],\log k\})$, where $\eta_h$ is defined to be the $\ell_1$-error of all the queries made in this phase. The desired competitive ratio now follows from Lemma \ref{lem:marker_clean}.
\end{proof}

\section{Lower bound}
    \label{sec:adaptive-lb}

\newcommand{\cP}{\mathcal{P}}
\newcommand{\cQ}{\mathcal{Q}}
\newcommand{\dd}{\texttt{d}}

The lower bound instance is fairly simple. Each phase starts with a request for a clean page that has never been requested before. Then, it is followed by requests for $k-1$ stale pages that are chosen uniformly at random among the $k$ stale pages.  We provide a formal description below.
 
\paragraph{Lower bound instance.} The page requests proceed in phases. 
Let $P_h$ denote the pages that are requested in phase $h$ for $h \in [H]$, where $H$ is a sufficiently large integer. We will have $|P_h| = k$ for all $h \in [H]$, and $|P_{h+1} \setminus P_h|  = |P_{h} \setminus P_{h+1}| = 1 $ for all $h \in [H-1]$. First, $P_1$ is an arbitrary set of $k$ pages and there is one request for each page in $P_1$. We now iteratively construct $P_{h+1}$ from $P_h$ as follows: Let $f_{h+1}$ be a clean page that has never been requested before. Let $p_1, \ldots, p_k$ be a uniformly random permutation of the set of pages in $P_h$. Then the request sequence for phase $h+1$ is $f_{h+1}$, $\langle f_{h+1}  \rangle^k p_k$, $ \langle f_{h+1} p_{k} \rangle^k p_{k-1}$, \ldots, 
$\langle f_{h+1} p_{k} \ldots p_3 \rangle^k p_{2}$ in this order. Here, $\langle S \rangle^k$ implies $k$ repetitions of the sequence $S$. Focusing on the page arriving after the repeated sequence, we will say that pages are requested in the order of $f_{h+1}, p_{k}, \ldots, p_2$.

The proof of Theorem \ref{thm:lb} requires care to impose constraints on the structure of candidate algorithms, and formally demonstrate that a learning-augmented algorithm for caching can do no better than querying unmarked stale pages and always evict the one that arrives furthest in the future. Unlike in the analysis of the upper bound, the algorithm can make varying numbers of queries per cache miss, even stochastically, which renders the analysis considerably more challenging. We defer the full proof to the Supplementary Material.

\begin{theorem}
    \label{thm:lb}
    For any integer $c \leq \ln k$, any $(c+4)$-competitive algorithm must make at least $\frac{1}{12 \ln (k+1)}c k^{1/ c} \cdot \opt$ queries (with no error). 
\end{theorem}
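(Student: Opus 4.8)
The plan is to show that on the random instance described, any $(c+4)$-competitive algorithm is forced to make many queries. The argument will have two parts: first, a structural step showing that an algorithm that wants to be competitive essentially cannot do better than ``query some unmarked stale pages and evict the one whose next request is furthest in the future'', and second, a counting step lower-bounding the total number of queries such an algorithm must make.

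For the counting step, fix a phase $h$ with clean page $f_{h+1}$ and focus on the single eviction chain it spawns (here $\ell_h = 1$ for every phase by construction, so $\opt = \Theta(H)$ by Lemma~\ref{lem:marker_clean}). I would set up a potential/rank argument mirroring the upper bound: when the algorithm incurs a cache miss on the $j$th page of the chain with $r_j$ unmarked stale pages still ``live'' (i.e. appearing later in the random order of $p_k,\dots,p_2$ and not yet evicted), the page it evicts has some rank $r_{j+1}\in\{0,1,\dots,r_j-1\}$ among the live stale pages. The key probabilistic fact is a \emph{lower bound} counterpart to Lemma~\ref{lem:uniform-ub}: if the algorithm queries only $q$ of the $r_j$ live pages (learning their true next-arrival times, since the oracle is error-free) and must commit to evicting one page, then because the random permutation is uniform, $\E[r_{j+1}\mid r_j, q]\ge \frac{r_j - q}{q+1}$ or something of this flavour — the unqueried pages are exchangeable, so the evicted page's rank, conditioned on all observations, stochastically dominates a suitable uniform-type variable. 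This gives $\E[r_{j+1}]\gtrsim r_j/(q_j+1)$ when $q_j\le r_j$, so after $j$ steps $\E[r_j]\gtrsim k/\prod_{i<j}(q_i+1)$; by a reverse-Markov / second-moment argument the chain length $M_h$ stays $\ge c$ with constant probability unless $\prod_{i<j}(q_i+1)\ge k^{1-o(1)}$ for $j$ around $c$, i.e. unless $\sum_i \log(q_i+1)\gtrsim \log k$, which by concavity forces $\sum_i q_i \gtrsim c\,k^{1/c}$ total queries on that chain. Since a $(c+4)$-competitive algorithm must have expected chain length $\le c+4+O(1)$, averaged over phases it cannot afford long chains too often, so it must be spending $\Omega(c k^{1/c})$ queries per phase on average; summing over the $H=\Theta(\opt)$ phases yields the claimed $\frac{1}{12\ln(k+1)}ck^{1/c}\cdot\opt$ bound, with the $\ln(k+1)$ and constant factors coming from the concavity/Jensen step and the reverse-Markov slack.

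The structural step is what turns an arbitrary algorithm into one amenable to the rank recursion: I would argue that (i) there is never any benefit to evicting a \emph{marked} page or a page \emph{not} requested this phase other than the chain-terminating one — evicting anything ``earlier'' in the order only creates extra future misses, so by an exchange argument we may assume the algorithm evicts a live stale page; and (ii) given the true arrival times of a queried subset, among those queried pages it is optimal to evict the furthest-in-future one, so the only real decision is \emph{how many} (and which, but by symmetry only the count matters in expectation) pages to query per miss. This lets us replace the algorithm's behaviour on each miss by a single number $q_j$, possibly random and adaptive, and reduces everything to the recursion above. The adaptivity and the fact that $q_j$ can be randomized is the main obstacle: the algorithm can look at how deep it is in the chain, at $r_j$, and at past observations before deciding $q_j$, so the recursion $\E[r_{j+1}\mid \mathcal F_j]\ge r_j/(q_j+1)$ must be stated conditionally on the full history, and the passage from "$\sum_j q_j$ is large on chains that are short" to "total queries are large in expectation" needs a careful stopping-time / Wald-type argument rather than a naive worst-case count — this is exactly the subtlety the authors flag ("the algorithm can make varying numbers of queries per cache miss, even stochastically, which renders the analysis considerably more challenging"). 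I expect the cleanest route is to define, per phase, the random variable $Q_h=\sum_j q_{h,j}$ and $M_h$, prove a pointwise-in-distribution tradeoff $\Pr[M_h\ge c]\ge \tfrac12$ whenever $Q_h < \tfrac{1}{6\ln(k+1)}ck^{1/c}$ using the rank recursion plus concavity, and then combine $\E[\sum_h M_h]\le (c+4)\opt + O(1)$ with $\E[\sum_h M_h]\ge \tfrac12 c\cdot\#\{h : Q_h \text{ small}\}$ to conclude that most phases have large $Q_h$, hence $\E[\sum_h Q_h]\ge \frac{1}{12\ln(k+1)}ck^{1/c}\cdot\opt$.
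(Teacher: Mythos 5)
Your instance, your structural reductions (lazy evictions, never evict marked pages, query only at evictions, evict the furthest‑in‑future queried page --- these are exactly Lemma~\ref{lem:lb-simplifying2}), and your final AM--GM/concavity step (``$c$ multiplicative rank reductions whose product is about $k$ force $\sum_i q_i = \Omega(c\,k^{1/c})$'') all match the paper's proof. The gap is in the probabilistic core. You propose to lower‑bound $\E[r_j]$ via the recursion $\E[r_{j+1}\mid r_j,q_j]\ge r_j/(q_j+1)$ (roughly) and then argue by ``reverse Markov / second moment'' that the chain survives to length $c$ with constant probability unless $\prod_i(q_i+1)\ge k^{1-o(1)}$. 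This step does not work as stated: reverse Markov with the only available upper bound $r_j\le k$ gives $\Pr[r_j\ge 1]\ge \E[r_j]/k\approx 1/\prod_i(q_i+1)$, which is vanishing rather than constant precisely in the regime you need; and a second‑moment route would require controlling $\mathrm{Var}(r_j)$ under adaptive, randomized, history‑dependent query counts, which your sketch does not do and which is genuinely difficult (the minimum of $q$ samples is not concentrated). Likewise, your proposed dichotomy ``$\Pr[M_h\ge c]\ge \tfrac12$ whenever $Q_h$ is small'' conditions on a random variable correlated with the whole trajectory, and nothing in the sketch justifies that conditioning.

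The paper sidesteps all of this by inverting the direction of the estimate: instead of lower‑bounding ranks, it upper‑bounds the expected \emph{ratio} $\E[R_i/R_{i+1}]\le O(\ln k)\cdot\E[Q_{i+1}]$ (Lemmas~\ref{lem:ratio-bd1} and~\ref{lem:ratio-bd2}; the heavy tail of $1/S$, where $S$ is the minimum of $q$ samples, is exactly where the $\ln(k+1)$ loss in the theorem comes from). On the event $\{M=c\}$ the telescoping identity $\prod_{i=1}^c R_{i-1}/R_i=k+1$ holds deterministically, so AM--GM gives $\sum_i R_{i-1}/R_i\ge c(k+1)^{1/c}$ pointwise (Lemma~\ref{lem:lb-conditional}), and a final convexity/LP argument over the distribution of $M$ removes the conditioning on the chain length. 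Two further ingredients are absent from your plan: the case in which the algorithm makes a very large batch of queries at once ($Q_{i+1}^2\ge R_i$), which is handled by a separate reduction (Lemma~\ref{lem:quadratic}) and is the source of the ``$+4$'' in the competitive ratio; and the discretization issue that ranks are drawn without replacement from a finite set rather than from a continuum (Lemma~\ref{lem:ratio-bd1}). So the architecture you chose is the right one, but the anti‑concentration step you lean on is not available, and you would need to switch to the ratio/telescoping formulation (or supply a genuinely new concentration argument) to close the proof.
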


\section{Experiments}
    \label{sec:exp}

We experimentally evaluate our algorithm on a real-world dataset and demonstrate the empirical dependendence of the competitive ratio on the number of queries made as well as on the prediction errors.

\paragraph{Input dataset.} We use the CitiBike dataset, closely following  \cite{lykouris2018competitive}. The dataset comes from a publicly-available~\cite{citibike} bike sharing platform operating in New York City. For each month of year the 2018, we construct one instance where each page request corresponds to the starting point of a bike trip.  We truncate each months data to the first 25,000 events, and thus each input sequence length is 25,000. Finally we set the cache size $k = 500$ and obtain 7 non-trivial instances\footnote{The other 5 sequences have less than 500 distinct pages and the caching problem is trival.}.
We use a bigger cache than \cite{lykouris2018competitive} to illustrate our algorithm's trade-off between number of queries and the competitive ratio. 

\paragraph{Predictions.}
To demonstrate the empirical dependence of different algorithms on the prediction error, we generate the following synthetic predictions. For each page $p$ in the cache, its predicted next request time is set to its actual next request time plus a noise, which is drawn i.i.d. from a lognormal distribution whose underlying normal distribution has mean 0 and standard deviation $\sigma$. If the page is never requested in the future, we pretend its actual request time is the sequence length plus 1, i.e., 25,001.

We also use a very simple prediction model to demonstrate the efficacy of easy off-the-shelf predictors. For each page, we compute the average time $\mu_p$ elapsed between consecutive requests for that page. For any page $p$ at time $t$, we set the predicted arrival time as $Q(p,t) = \tilde t_p + \mu_p$ where $\tilde t_p$ is the last time before $t$ when page $p$ was requested. We refer to these predictions as ``Mean Predictions'' in Table \ref{tab:results}.

\begin{figure}[tb]
    \centering
    \includegraphics[width=0.45\textwidth]{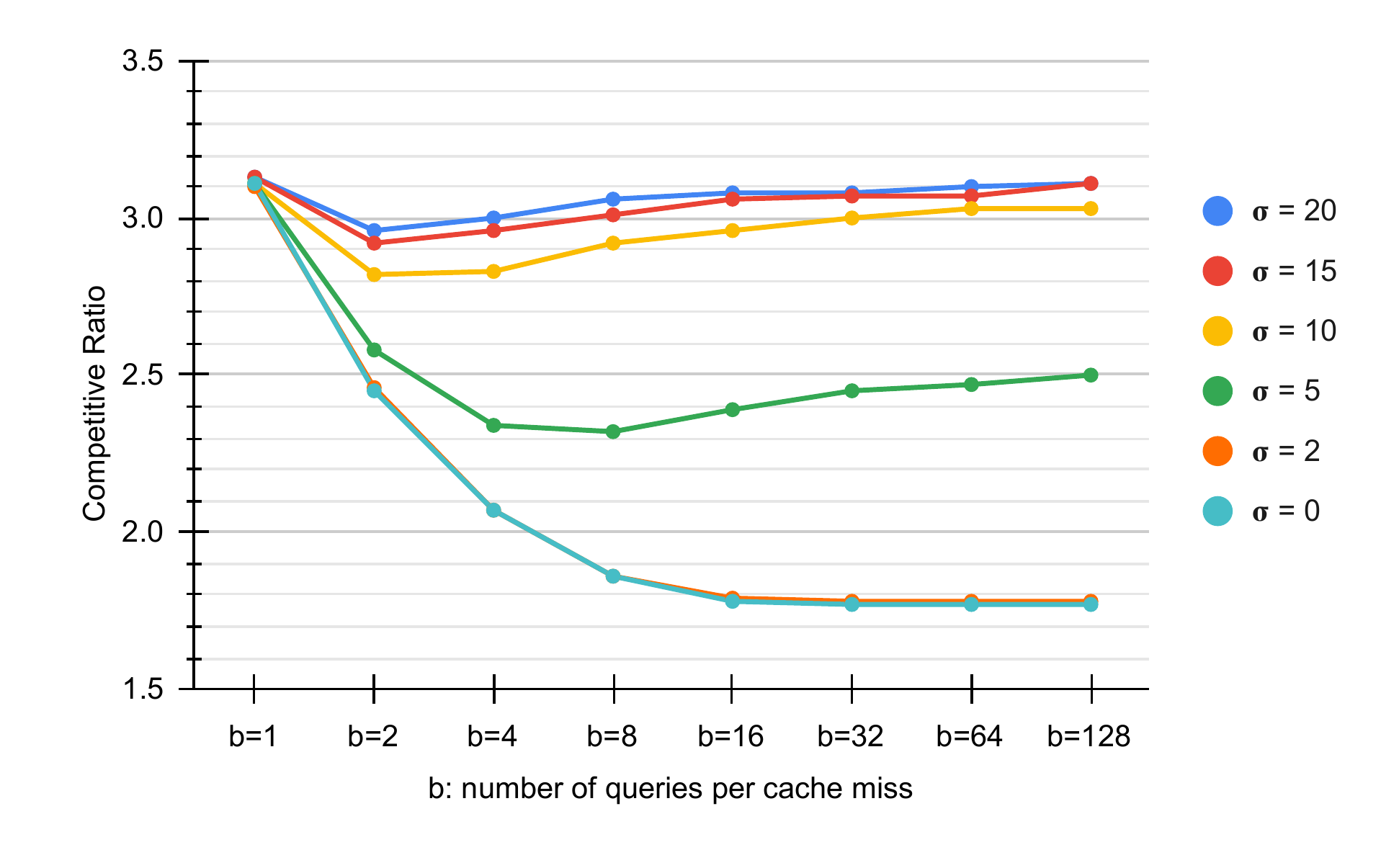}
    \caption{Average competitive ratio of \AdaptiveQuery  for different values of $b$ and the error parameter $\sigma$.}
    \label{fig:exp}
\end{figure}

\paragraph{Algorithms.}  We implement the following algorithms. 

\begin{itemize}[nosep]
    \item \RandMarker (Randomized Marking, \citet{fiat1991competitive}). Evicts a randomly chosen unmarked page; $\Theta(\log k)$-competitive. 
    \item \LRU (Least Recently Used). A widely used heuristic that evicts the least recently used page. 
    \item \BlindOracle. Evicts the page with the latest predicted next request time. 
    \item \LVMarker \cite{lykouris2018competitive}. A learning-augmented marking algorithm that evicts the page with the furthest predicted arrival until the length of the eviction chain is $O(\log k)$ and then switches to evicting a randomly chosen unmarked page.
    \item \RMarker \cite{rohatgi2020near}. Identical to \LVMarker except the switch occurs after the chain length exceeds one. 
    \item \RobustOracle \cite{wei2020better}. Uses the \emph{combiner}~\cite{fiat1994competitive} to combine \BlindOracle and \RandMarker.
    \item \AdaptiveQuery-$b$. Our algorithm (with worst-case guarantees) that is parameterized by $b$, the number of queries made per cache miss. 
\end{itemize}

\paragraph{Results.}

\begin{table}[tb]
\centering
\scriptsize
\caption{Average competitive ratio of algorithms for different error parameters. (Smaller values means better performance.)}
\label{tab:results}
\begin{tabular}{r|c|c|c|c|c}
\hline
\multirow{2}{*}{Algorithms} & Mean  & \multicolumn{4}{c}{Synthetic Predictions}\\ \cline{3-6}
    & Predictions & $\sigma = 0$ & $\sigma = 2$ & $\sigma = 4$ & $\sigma = 6$ \\
\hline    
\RandMarker & 3.14 & 3.14 & 3.14 & 3.14 & 3.14 \rule{0pt}{2.6ex}\\
\LRU   & 2.86 & 2.86  & 2.86  & 2.86  & 2.86  \\
\hline
\BlindOracle & 1.92 & 1.00 & 1.02 & 3.92 & 4.15 \rule{0pt}{2.6ex}\\
\LVMarker & 2.49 & 1.77 & 1.81 & 2.94 & 3.11 \\
\RMarker & 2.54 & 1.77 & 1.83 & 3.15 & 3.29 \\
\RobustOracle & 4.29 & 1.80 & 1.83 & 4.48 & 4.51 \\
\hline
\AdaptiveQuery-2 & 2.91 & 2.46 & 2.46 & 2.52 & 2.65 \rule{0pt}{2.6ex}\\
\AdaptiveQuery-4 & 2.71 & 2.07 & 2.07 & 2.20 & 2.49 \\
\AdaptiveQuery-8 & 2.59 & 1.86 & 1.86 & 2.07 & 2.54 \\
\hline
\end{tabular}
\end{table}

Table~\ref{tab:results} shows the competitive ratios of all  the implemented algorithms averaged over the seven instances.
We observe that our \AdaptiveQuery algorithm performs significantly better that \RandMarker and \LRU (that do not use any predictions), even while using very few predictions, e.g., making $b = 2$ queries on each cache miss. Since our algorithm is equivalent to \RandMarker when $b=1$, this demonstrates than even minimal predictions can considerably help online algorithms.

At the same time, Table \ref{tab:results} also demonstrates that \AdaptiveQuery performs as well as the three other learning-augmented algorithms even for relatively small values of $b$ with both synthetic predictions as well as the simple mean predictions. In fact, for high prediction errors, the \AdaptiveQuery algorithm is less affected by these errors and outperforms the other learning-augmented algorithms. 
For comparison, with $b = 8$, the \AdaptiveQuery algorithm uses only $2839$ queries for each instance on average, it utilizes predictions for about 11\% of requests in the sequence. %

We also compare the dependence of our algorithm on the number of queries and the prediction error. Figure \ref{fig:exp} shows the competitive ratio of \AdaptiveQuery-$b$ for different values of $b$ and different error parameters. Unsurprisingly, we observe that when predictions are perfect (or very good), the competitive ratio of the algorithm improves with the number of queries it is allowed to make. Surprisingly, however, when the prediction error is large, using more queries actually leads to a worse competitive ratio. This is because when predictions are highly erroneous, using many queries often leads to making poor eviction decisions.

\section{Conclusions}

In this paper we initiate the study of online algorithms augmented with parsimonious learned predictions. Both the theory and experimental results suggest that performance of online algorithms can be significantly improved by judiciously using just a few predictions. Such an approach can make learning-augmented algorithms more practically appealing since obtaining predictions is often computationally expensive.
An interesting future direction is to further explore this parsimonious  model for other online problems. For example, consider problems that involve predictions of locations, such as metric task system and online matching \cite{antoniadis2020online}.  It is conceivable that one can use less predictions by spatial interpolation.

\bibliographystyle{icml2022}
\bibliography{bibfile}

\newpage
\onecolumn
\section*{\centering Supplementary Material}
\appendix
\section{Proof of Lemma \ref{lem:uniform-ub}}
\begin{proof}
    We proceed assuming that replacement is allowed since it only increases 
    $\E[\min_{t \in [b]} s_t]$. Then, the sampling process can be simulated by sampling $s'_1, \ldots, s'_b \sim \text{Unif}(0, \rk)$ from the uniform distribution and taking the floor of them. Thus, we have
    $\E[\min_{t \in [b]} s_t] \leq \E [\min_{t \in [b]} s'_t]$.
    
    It well known that if $x_1, \ldots, x_b$ are uniformly sampled from $[0, 1]$, then $\E [\min_{t \in [b]} x_t] = \frac{1}{b+1}$.  (This can be easily verified by observing that $\Pr [\min_{t \in [b]} x_t > x] = (1 - x)^b$ and a simple calculus.)  Since we can set $s'_t = \rk\cdot x_t$, we have,
    $\E [\min_{t \in [b]} s_t] \leq \E [\min_{t \in [b]} s'_t] = \frac{\rk}{b+1}$.
\end{proof}

\section{Lower bound}

We repeat the lower bound instance here for clarity.

\paragraph{Lower bound instance.} The page requests proceed in phases. 
Let $P_h$ denote the pages that are requested in phase $h$ for $h \in [H]$, where $H$ is a sufficiently large integer. We will have $|P_h| = k$ for all $h \in [H]$, and $|P_{h+1} \setminus P_h|  = |P_{h} \setminus P_{h+1}| = 1 $ for all $h \in [H-1]$. First, $P_1$ is an arbitrary set of $k$ pages and there is one request for each page in $P_1$. We now iteratively construct $P_{h+1}$ from $P_h$ as follows: Let $f_{h+1}$ be a clean page that has never been requested before. Let $p_1, \ldots, p_k$ be a uniformly random permutation of the set of pages in $P_h$. Then the request sequence for phase $h+1$ is $f_{h+1}$, $\langle f_{h+1}  \rangle^k p_k$, $ \langle f_{h+1} p_{k} \rangle^k p_{k-1}$, \ldots, 
$\langle f_{h+1} p_{k} \ldots p_3 \rangle^k p_{2}$ in this order. Here, $\langle S \rangle^k$ implies $k$ repetitions of the sequence $S$. Focusing on the page arriving after the repeated sequence, we will say that pages are requested in the order of $f_{h+1}, p_{k}, \ldots, p_2$.

We first observe that the optimum offline solution for such an instance always incurs a cache miss on the first, clean page of each phase (except the first phase) and evicts the unique page in $P_{h-1} \setminus P_h$. By construction, the optimum algorithm incurs no more cache misses in each phase. Finally, any algorithm must incur $k$ cache misses for the first phase and we have the following claim.

\begin{claim}
    \label{claim:lb-opt}
There is an offline solution that incurs exactly $k+H-1$ cache misses in total.
\end{claim}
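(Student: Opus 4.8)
The plan is to establish Claim~\ref{claim:lb-opt} by exhibiting a specific offline caching strategy and counting its cache misses exactly, phase by phase. First I would observe the structural fact that drives everything: for $h \geq 2$, the pages requested in phase $h$ are exactly $P_h$, and $P_h$ differs from $P_{h-1}$ in exactly one page---the clean page $f_h$ enters and the unique page in $P_{h-1} \setminus P_h$ leaves. Moreover, within phase $h$, every page of $P_h$ is requested (the clean page $f_h$ first, then $p_k, p_{k-1}, \ldots, p_2$, and $p_1 = f_h$ already appeared), so after phase $h$ completes, the natural cache contents are precisely $P_h$.

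The key steps, in order, are: (i) Describe the offline solution. In phase $1$, bring in each of the $k$ pages of $P_1$ as it is first requested; since the cache starts empty (or in an arbitrary state), this costs exactly $k$ cache misses, and at the end of phase $1$ the cache holds $P_1$. (ii) For each subsequent phase $h$ with $2 \le h \le H$, the offline algorithm enters the phase with cache contents $P_{h-1}$. The first request is for the clean page $f_h \notin P_{h-1}$, causing one cache miss; the offline algorithm evicts the unique page $g_h \in P_{h-1} \setminus P_h$, so the cache now holds $P_{h-1} \setminus \{g_h\} \cup \{f_h\} = P_h$. (iii) Argue that the remaining requests in phase $h$ are all to pages in $P_h$, which are now all in the cache, so no further cache miss occurs in phase $h$. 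This uses the fact that the phase-$h$ request sequence only ever requests $f_h$ and $p_k, \ldots, p_2$, all members of $P_h$. (iv) Sum: $k$ misses in phase $1$ plus exactly one miss in each of phases $2, \ldots, H$, i.e., $H-1$ misses, giving $k + H - 1$ total.

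I do not expect any genuine obstacle here---the claim is essentially a bookkeeping exercise made clean by the careful construction of the instance. The one point requiring a sentence of care is the tie-in with the eviction at the start of each phase: we must check that the page the offline algorithm needs to evict, namely $g_h$, is indeed not requested again in phase $h$ (true, since $g_h \notin P_h$) and, more subtly, that evicting it does not hurt future phases---but since $g_h \notin P_h$, it will only possibly be needed in some later phase, and by then it counts as a fresh clean page handled by that phase's first-request miss; the telescoping structure of the $P_h$'s makes this consistent. A second minor point is to note this is an \emph{upper} bound on $\opt$ (an explicit feasible solution), which is all the claim asserts; the matching lower bound that $\opt \geq k + H - 1$, while also easy (each clean page forces a distinct miss), is not needed for this claim. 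I would state the argument in two short paragraphs: one defining the strategy, one doing the count.
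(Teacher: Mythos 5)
Your proposal is correct and follows exactly the argument the paper sketches (the paper states this claim without a formal proof, observing only that the offline solution pays $k$ misses in phase~$1$, then one miss per subsequent phase by evicting the unique page of $P_{h-1}\setminus P_h$ upon the clean-page request). Your phase-by-phase bookkeeping, including the check that the evicted page $g_h\notin P_h$ is not requested again in phase $h$, is precisely what is needed.
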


We would like to lower bound the number of cache misses incurred by any $c$-competitive algorithm. Consider a fixed optimum online algorithm $\cA$. 

\begin{claim}
    \label{claim:lb-simplifying1}
    At the beginning of each phase $h \geq 2$, we can assume without loss of generality that $\cA$ has all pages in $P_{h-1}$ in cache. 
\end{claim}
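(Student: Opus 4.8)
I would argue that, for the purpose of lower bounding the number of queries, we may restrict attention to algorithms that hold exactly $P_{h-1}$ at the start of every phase $h\ge 2$, and that imposing this restriction costs us nothing. The reduction is as follows. Given any $c$-competitive algorithm $\cA$, let $\cA'$ be the algorithm obtained by granting $\cA$, at the start of each phase $h\ge 2$, a \emph{free correction} of its cache to the set $P_{h-1}$. I would check that (a) $\cA'$ incurs no more cache misses than $\cA$---a correction can only delete misses that $\cA$ actually pays for, since it only inserts pages (turning later requests into hits) and deletes pages that, as observed below, are never requested again---so $\cA'$ is again $c$-competitive; (b) $\cA'$ makes no more queries than $\cA$; and (c) the correction uses only information about phase $h-1$, which $\cA$ has completely observed by the time phase $h$ begins, so $\cA'$ learns nothing about the randomly permuted future. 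Hence a query lower bound proved for all such corrected algorithms transfers verbatim to $\cA$, and in proving that lower bound we may simply assume $\cA$ holds $P_{h-1}$ at the start of phase $h$.

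Two structural facts make the correction harmless. First, every request in phase $h-1$ is to a page of $P_{h-1}$; moreover the last $k$ requests of phase $h-1$ are precisely the $k$ distinct pages of $P_{h-1}$, in the order $f_{h-1}, p_k, p_{k-1}, \ldots, p_2$, because the final block $\langle f_{h-1}p_k\cdots p_3\rangle^k$ ends with one full cycle $f_{h-1}, p_k, \ldots, p_3$ followed by the lone request $p_2$. Second, any page $x\notin P_{h-1}$ sitting in a cache at the start of phase $h$ is never requested again from phase $h$ on: an easy induction gives $P_{h'}\subseteq P_{h-1}\cup\{f_h, f_{h+1},\ldots,f_{h'}\}$ for all $h'\ge h-1$, and every $f_{h''}$ with $h''\ge h$ is a brand-new page, hence different from $x$ (which had to be requested earlier in order to be in the cache). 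So at the start of phase $h$ everything the algorithm might still need lies inside $P_{h-1}$; correcting the cache to $P_{h-1}$ discards only provably useless pages, and, symmetrically, over the last $k$ requests of phase $h-1$ an algorithm could evict only such useless pages and reach the configuration $P_{h-1}$ by itself.

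The delicate point---and the one I expect to take the most care---is part (b), that the correction does not let the algorithm reduce its query count, since which pages an algorithm may query is tied to its current cache and after a correction the caches of $\cA$ and $\cA'$ differ. I would handle this by maintaining, throughout each phase, a coupling under which the discrepancy between the two caches consists only of pages outside $P_{h-1}$ (hence, by the second fact, pages never requested again): whenever $\cA$ queries such a discrepant page, $\cA'$ queries, in its place, one of the pages of $P_{h-1}$ that $\cA$ is currently missing---or else makes no query---so that $\cA'$ never exceeds $\cA$'s query count, and $\cA'$ picks its evictions so as to re-establish the invariant that its cache equals $P_{h-1}$ at the next phase boundary. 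With that coupling in hand, the cache-miss comparison in (a) and the future-information claim in (c) are immediate, and the claim follows.
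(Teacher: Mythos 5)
Your proposal is correct and rests on the same two observations as the paper's own (one-sentence) proof: the universe is infinite so the clean page of phase $h$ cannot already be in the cache, and every page outside $P_{h-1}$ is never requested again, so exchanging such pages for the missing pages of $P_{h-1}$ can only help. Your write-up is considerably more detailed than the paper's---in particular the coupling that keeps the query count of the corrected algorithm no larger is left entirely implicit there---but it is the same argument, not a different route.
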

\begin{proof}
    Assume that the universe of  pages is infinite. Then, knowing that we cannot guess the clean page that will be requested in phase $h$ and all the other requests are for the stale pages, the claim follows.
\end{proof}

Thanks to Claim~\ref{claim:lb-simplifying1} and the repeated identical structure of the lower bound instance in every phase, we can assume without loss of generality that we use the same optimum online algorithm that we call $\cA$ in all phases except the first. If $c'$ is the expected number of cache misses $\cA$ incurs in each phase $h \geq 2$, for $\cA$ to be $c$-competitive, it must be the case that $\frac{k + c'(H-1)}{k + H-1} \leq c$ from Claim~\ref{claim:lb-opt}. Here, $k$ in the numerator is the number of cache misses incurred by $\cA$ in the first phase. Thus, we must have $c' = c$ as $H \rightarrow \infty$.

Therefore, we can focus on one phase and lower bound the expected number of queries made by $\cA$ assuming that it incurs at most $c$ cache misses in expectation. Henceforth we drop indices referring to phases from the notation. We will say that a page is marked in the phase if it has been requested in the phase. For simplicity, we will assume that $\cA$ makes at least one query before each page eviction; this would have no effect on the asymptotic lower bound we aim to prove.

\begin{lemma}
    \label{lem:lb-simplifying2}
    In a phase that is not the first, with a given limit $c > 0$ on the expected number of cache misses, there is an algorithm that makes the minimum number of queries in expectation and simultaneously satisfies the following: 
    \begin{enumerate}[nosep]
        \item[(i)] evicts a page only when it is forced to do so;
        \item[(ii)] never evicts marked pages and therefore it only needs to query unmarked pages; 
        \item[(iii)] only queries pages just before a page eviction;
     \item[(iv)] evicts the queried page with furthest arrive time, if it makes any queries.
    \end{enumerate}
\end{lemma}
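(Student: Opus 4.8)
The plan is to start from an arbitrary algorithm $\cA'$ that, within the phase, incurs at most $c$ cache misses in expectation, and to apply a sequence of local transformations, each of which increases neither the expected number of cache misses nor the expected number of queries, until we reach an algorithm satisfying (i)--(iv); applying this chain to a query-minimizing such algorithm then proves the lemma. Two facts will be used repeatedly: since the oracle is error-free, a query $\cQ(p,t)$ merely returns $a_{p,t}$; and, by the symmetry of the instance under relabeling of pages, conditioned on everything revealed so far the unmarked pages that $\cA'$ has not queried are exchangeable, so a deterministic eviction of an unqueried unmarked page is equivalent in distribution (for future misses and queries) to evicting a uniformly random unqueried unmarked page.

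First I would obtain (i) and (ii) by the classical paging exchange arguments. For (i), defer every voluntary eviction until the algorithm is forced to evict; a standard coupling shows the lazy version's cache always contains the original's, so it incurs no more misses and can still issue every query the original did. For (ii), whenever $\cA'$ is about to evict a marked page it instead evicts an unmarked one---which always exists since the phase touches only $k$ distinct pages and the cache has size $k$---and since every marked page other than the unique stale page that is never requested again is requested later in the phase, this weakly reduces the number of misses; queries that do not influence any eviction decision can then be deleted, which only lowers the query count, so afterwards only unmarked pages are queried.

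Next I would establish (iii). Once (i) and (ii) hold, the value $\cQ(p,t)=a_{p,t}$ returned for an unmarked page $p$ does not change while $p$ stays unmarked (it equals $p$'s next request time in the phase, fixed once the permutation is drawn). Hence every query can be postponed to the instant just before the first eviction whose decision it influences, or deleted if it influences none, without changing the information available at any eviction; this does not increase the number of queries and yields (iii).

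The main obstacle is (iv). After (i)--(iii), consider an eviction at which the algorithm has queried a set $S$ of (still unmarked) pages, knows $\{a_p : p \in S\}$, and evicts $q$. If $q \in S$, then writing $q^\star = \argmax_{p \in S} a_p$ and noting $a_q \le a_{q^\star}$, replacing the eviction of $q$ by that of $q^\star$ is a Belady-type exchange: keeping $q$ (which is requested no later than $q^\star$) is weakly more useful, so the number of misses does not increase and the queries are unchanged; thus we may assume $q = q^\star$, which is exactly (iv). The delicate case is $q \notin S$: the eviction is then ``blind'', hence by exchangeability equivalent in distribution to evicting a uniformly random unqueried unmarked page, and I would argue this step is dominated---under the same query budget---by one that does not waste the queries of $S$: either evict $q^\star \in S$ when that is at least as good, or drop the now-unused queries and perform the blind eviction with one fewer query, which by exchangeability has the same expected effect; iterating removes all blind queries, except that, to respect the stated simplification that each eviction is preceded by at least one query, one dummy query is retained per eviction, which changes the query count by only an $O(1)$ factor per cache miss and so does not affect the asymptotic lower bound. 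Chaining all of these transformations gives an algorithm satisfying (i)--(iv) with no larger expected miss count and no larger expected query count, which is the claim.
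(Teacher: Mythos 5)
Your overall route is the same as the paper's: make the algorithm lazy for (i), use an exchange argument to forbid evicting marked pages for (ii), defer/delete queries for (iii), and then argue that evicting the queried page with the furthest arrival time is an improvement for (iv). Two places are thinner than they need to be, and both concern the fact that $\cA$ is a \emph{randomized online} algorithm whose later decisions depend on its cache contents, so a single Belady-style swap does not close on its own. For (ii), the paper does not use a generic ``the marked page is requested sooner'' exchange; it exploits the $\langle \cdot \rangle^k$ repetition structure of the instance: if a marked page is evicted, the algorithm pays a miss on every one of the $k$ repetitions until it corrects itself, so either it might as well have evicted an unmarked page immediately, or it already incurs $\Omega(k)$ misses and is dominated by an arbitrary marking algorithm. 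For (iv), the paper couples $\cA$ with the modified algorithm $\cA'$ (same number of queries before each eviction, but $\cA'$ evicts the minimum-rank queried page) and proves by induction that the sorted rank sequence of $\cA'$'s queried pages coordinatewise dominates that of $\cA$; this stochastic dominance is exactly what lets one conclude that the chain $\chain$ of $\cA'$ terminates no later and hence that misses and queries do not increase, and it is the ingredient your ``replace $q$ by $q^\star$, misses do not increase'' step silently relies on. Your treatment of the blind case ($q\notin S$) is an ``either/or'' without a criterion for which branch applies; the paper sidesteps this case entirely by the standing assumption that at least one query is made before each eviction and the evicted page is among those queried, so on this point your sketch and the paper are comparably informal. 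With the dominance coupling supplied for (iv) (and the repetition structure invoked for (ii)), your argument becomes the paper's proof.
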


We first prove (i)--(iii). After setting up additional notation that will be used throughout the analysis, we will prove (iv).

\begin{proof}[Proof of Lemma~\ref{lem:lb-simplifying2}(i)--(iii)]
    As argued in \citet{fiat1991competitive}, we can assume wlog that the algorithm needs to make a change (evict a page) only when it is forced; such algorithms are called \emph{lazy}. This implies (i).
    
    We now show (ii). Suppose algorithm $\cA$ evicts a page $p_i$ for some $i > j$ to service $p_j$ in the request sub-sequence, $\langle p_{k+1}p_k \ldots p_{j+1}\rangle^k p_j$.
    Then, for every repetition of $\langle p_{k+1}p_k \ldots p_{j}\rangle$ in the subsequence of $\langle p_{k+1}p_k \ldots p_j\rangle^k p_{j-1}$, until $\cA$ fetches $p_i$ and evicts an unmarked page, it incurs another cache miss. If it makes a cache miss for every repetition, we can make $\cA$ better or no worse by instead evicting an arbitrary unmarked page (such an algorithm incurs at most $k$ cache misses even without using randomization). Otherwise, that is, if $\cA$ ever replaces with an unmarked page before $p_{j-1}$ is requested, we could let $\cA$ have done so earlier to reduce cache misses. Thus, we can assume that before we see $p_{j-1}$, an unmarked page was evicted to service $p_j$. This will incur no cache misses for the repetition $\langle p_{k+1}p_k \ldots p_j\rangle^k$. We have shown that we can assume wlog that $\cA$ never evicts marked pages; thus, we have  (ii). 
    
    Now (iii) follows as once a page gets marked it stays marked in the phase. Thus, by deferring the queries until being forced to evict a page, $\cA$ can only potentially avoid querying about pages that will be marked soon. 
\end{proof}

For the remaining analysis, we take the eviction chain view we used in the analysis of the adaptive query algorithm. As our analysis will require careful conditioning and deconditioning, we slightly override the notation. From the above reasoning, particularly from  Lemma~\ref{lem:lb-simplifying2}~(i)--(iii), we now have the following problem: We will see a request sequence for a clean page $p_{k+1}$, and $k-1$ stale pages, $p_{\pi(k)}$, $p_{\pi(k-1)}$, \ldots, $p_{\pi(2)}$. Note that $p_{\pi(1)}$ is the dead page that is not requested. Here,  $\pi(k), \ldots, \pi(1)$, denotes a random permutation of the $k$ stale pages. Then, we consider the chain $\chain$ that starts with $p_{k+1}$ and ends with $p_{\pi(1)}$. Recall that an edge from $q$ to $q'$ means that we evict page $q'$ to service page $q$. Our goal is to lower bound the number of queries made under the requirement that the expected length (number of cache misses) $M$ of $\chain$ is at most $c$. Page $p_{\pi(j)}$ is defined to have rank $j$; this definition is slightly simpler than the  one in Section~\ref{sec:adaptive-analysis} as we have only one clean page, thus only one chain.

With this notation set up, we are now ready to prove Lemma~\ref{lem:lb-simplifying2}(iv).

\begin{proof}[Proof of Lemma~\ref{lem:lb-simplifying2}(iv)]
 Our goal is to consider any algorithm $\cA$ satisfying (i)--(iii), and to construct another algorithm $\cA'$ satisfying (iv) as well, without increasing the number of cache misses but making no more queries. 
 
 In the execution of $\cA$, suppose $i$th evicted page by $\cA$ has rank $R_i$ and $\cA$ makes $q_i$ queries just before the eviction. Now $\cA'$ makes the same number $q_i$ of queries just before evicting $i$th page, but it instead evicts the one with smallest rank among the queried pages, i.e., satisfies property (iv). By a simple induction on $i$, we can show that $\cA'$ generates a sequence that stochastically dominates what $\cA$ generates. More precisely, suppose $\cA$ has made $q$ queries, and the queried pages have ranks $S_1 < \cdots < S_q$. Then, by definition, $\cA'$ has also made $q$ queries (if there are not enough pages to query, then it is only better for $\cA'$), and let $S'_1 < \cdots < S'_q$ be the rank of pages queried by $\cA'$. Then, we say that the sequence $S'$ stochastically dominates  sequence $S$ if $S'_j \leq S_j$ for all $j \in [q]$. Because the chain $\chain$ ends once the last page of rank $1$ is evicted (we can assume we evict only a queried page under the assumption we query at least one page before each page eviction), $\cA'$ make no more cache misses than $\cA$ in expectation. Further, by construction, $\cA'$ can only make less queries than $\cA$ in expectation. 
\end{proof}

Henceforth, we consider an algorithm $\cA$ that satisfies Lemma~\ref{lem:lb-simplifying2}(i)--(iv). Let $R_0 := k+ 1$, be the rank of the clean page. Let $R_i$ denote the rank of the $i$th evicted page. As mentioned before, $\chain$ eventually ends with $p_{\pi(1)}$. For notational convenience, once $R_i$ becomes 1, we define all the subsequent $R_{i+1}, R_{i+2}$, \ldots, to be 1.

Let $Q_i$ denote the number of queries $\cA$ makes when we witness the $i$th cache miss. For analysis, we will assume that we do not make too many queries for each page eviction. This is because we can find the page $p_{\pi(1)}$ without making many more queries. 

\begin{lemma}
    \label{lem:quadratic}
Suppose we show that any algorithm that incurs at most $c$ cache misses in expectation makes at least $d$ queries under the assumption that $R_i > Q_{i+1}^2$ for all $i$. Then, it implies the following lower bound: any algorithm that incurs at most $c+4$ cache misses in expectation makes at least $d/5$ queries. 
\end{lemma}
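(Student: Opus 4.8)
The idea is to show that the assumption "$R_i > Q_{i+1}^2$ for all $i$" can be removed at the cost of only a constant additive loss in the competitive ratio and a constant multiplicative loss in the query count. Concretely, I would take an arbitrary algorithm $\cA$ satisfying Lemma~\ref{lem:lb-simplifying2}(i)--(iv) that incurs at most $c+4$ cache misses in expectation, and build from it an algorithm $\cA'$ that (a) always satisfies $R_i > Q_{i+1}^2$, (b) incurs at most $c$ cache misses in expectation, and (c) makes at most (roughly) $5$ times fewer queries than $\cA$ in expectation. Applying the hypothesized bound to $\cA'$ then gives that $\cA'$ makes at least $d$ queries, hence $\cA$ makes at least $d/5$ queries, which is the claim.

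To construct $\cA'$: whenever $\cA$ is about to evict its $(i{+}1)$st page and the current rank $R_i$ satisfies $R_i \le Q_{i+1}^2$ (i.e. $\cA$ would query "too many" pages relative to how many unmarked stale pages remain), $\cA'$ instead queries \emph{all} $R_i$ remaining unmarked stale pages. Since under property (iv) it then evicts the one of smallest rank, namely the dead page $p_{\pi(1)}$ of rank $1$, the chain terminates \emph{immediately} at this step. So a single "violating" eviction in $\cA$ is replaced by one final eviction in $\cA'$ that ends the phase. The key accounting points are: first, the number of queries $\cA'$ makes at such a step is $R_i \le Q_{i+1}^2$, versus $Q_{i+1}$ for $\cA$; since $\cA$ is assumed (by the standing convention in the paragraph before Lemma~\ref{lem:quadratic}) not to make "too many" queries per eviction, one argues $R_i \le Q_{i+1}^2$ is comparable to — not much larger than — the queries $\cA$ would have spent over the remainder of the chain, so this does not blow up the query count by more than a constant factor; second, because $\cA'$ stops the chain as soon as the first violation would occur, $\cA'$ makes \emph{at most as many} cache misses as $\cA$ up to that point, so $\E[M_{\cA'}] \le \E[M_{\cA}] \le c+4$. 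The constant "$4$" slack and the factor "$5$" are then chosen to absorb the overhead of these truncation steps: roughly, one charges the extra $R_i \le Q_{i+1}^2$ queries of a truncation against the at least $Q_{i+1}$ queries $\cA$ made there plus the cache misses saved, and a Markov / expectation argument over which chains get truncated converts the "$+4$" budget room into the "$/5$" query bound.

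The step I expect to be the main obstacle is the quantitative bookkeeping in (c): showing that truncating at the first violation costs only a constant factor in expected queries. The subtlety is that $\cA'$'s truncation step can make $Q_{i+1}^2$ queries — quadratically many — at once, which is wasteful for that single eviction; the savings must come from the fact that this eviction \emph{ends} the chain, so $\cA'$ avoids all of $\cA$'s subsequent evictions and their queries. One has to argue that, in expectation, the queries $\cA$ would have made after the truncation point (which are at least one per subsequent eviction, and at least $Q_{i+1}$ at the truncation point itself, since $R_i \le Q_{i+1}^2$ means $Q_{i+1} \ge \sqrt{R_i}$ is already "large" relative to the remaining budget) dominate the $\le Q_{i+1}^2$ queries of the truncation. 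Formalizing this requires conditioning on the rank process $R_0 > R_1 > \cdots$ and the query counts $Q_i$, partitioning executions by whether/when a violation first occurs, and a careful amortized comparison — but the structure is routine once the truncation coupling is set up, and the "$c+4$ / $d/5$" constants give enough room that no tight estimate is needed.
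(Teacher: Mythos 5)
There is a genuine gap, and it sits exactly where you flagged it. Your truncation rule---at the first violating step, query \emph{all} $R_i$ remaining unmarked stale pages and evict the dead page---costs $R_i$ queries at that single step, and the only bound available is $R_i \le Q_{i+1}^2$. That is a factor of $Q_{i+1}$ more than the $Q_{i+1}$ queries $\cA$ spends there, and $Q_{i+1}$ is not bounded by any constant. Your proposed way to absorb this (``$R_i \le Q_{i+1}^2$ is comparable to the queries $\cA$ would have spent over the remainder of the chain'') is simply false in the worst case: $\cA$'s chain can terminate at that very eviction (with $Q_{i+1}$ queries it may well pick the rank-$1$ page and stop), so $\cA$ spends $Q_{i+1}$ further queries in total while $\cA'$ spends up to $Q_{i+1}^2$. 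No choice of constants $+4$ and $/5$ can close a multiplicative gap of $Q_{i+1}$. A secondary issue: you assert the truncated algorithm incurs at most $c$ misses, but your own argument only yields at most $c+4$ (truncation does not \emph{remove} four misses), and the truncation step itself has $Q'_{i+1} = R_i \ge \sqrt{R_i}$, so $\cA'$ still violates $R_i > (Q'_{i+1})^2$ and is not in the class the hypothesis covers.

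The paper's replacement of the tail is different in precisely the way needed to fix this. At the first violation it does \emph{not} query everything at once; it switches to making $\sqrt{R_i}$ queries \emph{per cache miss} until the phase ends. The residual process is then an instance of the adaptive query algorithm on a ``cache'' of size $R_i$ with $b = \sqrt{R_i}$, so Lemma~\ref{lem:length-ub} bounds the expected number of additional cache misses by $\log_{\sqrt{R_i}+1}(R_i) + 3 \le 5$ (hence $4$ extra, since the original would have incurred at least one), and the expected number of additional queries by $5\sqrt{R_i} \le 5\,Q_{i+1}$ --- at most $5$ times what $\cA$ was already spending at the violating step. This geometric-decay argument via Lemma~\ref{lem:length-ub} is the missing ingredient: it is the only place the constants $+4$ and $/5$ actually come from, and your one-shot ``query everything'' truncation cannot reproduce it.
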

\begin{proof}
    If the fixed algorithm considered to show the lower bound makes $Q_{i+1}$ queries such that $R_i \leq Q_{i+1}^2$ for the first time, then instead we let it make $\sqrt {R_i}$ queries per cache miss until the phase ends.
    This is equivalent to a problem where the cache size is $R_i$ and we make $\sqrt R_i$ queries per cache miss. Thus, 
    by Lemma~\ref{lem:length-ub}, we know that the number of cache misses is at most 5 in expectation, and we make at most $5 \sqrt R_i \leq 5Q_{i+1}$ queries in expectation. In summary, this change makes at most 4 extra cache misses in expectation and increases the number of queries by a factor of 5. 
\end{proof}

Let us fix $i$ and let $R_{i-1} = r$ and $Q_i = q$. If we choose $q$ points uniformly at random from $[0, r]$ and the expected minimum of the samples is well known to be $\frac{r}{q + 1}$.  But, we want to know the expected value of 
$\frac{r}{R_{i+1}}$, where the only randomness comes from $\pi(1), \ldots, \pi(r-1)$, which we denote as $\pi([r-1])$ for short. Bounding this quantity needs more care. 

Now, our concern is to upper bound $\E_{\pi([r-1])} \frac{r}{S}$, where $S_1, \ldots, S_q$ are sampled uniformly from $[r-1]$ without replacement and $S := \min_{j \in [q]} S_j$. This corresponds to making queries about $q$ pages among $r - 1$ unmarked ones and evicting the one with the minimum rank, i.e., the furthest request time in the future. For ease of analysis, we pretend that samples $S'_1, \ldots, S'_q$ are made from $[1/r, 1]$ and we want to upper bound $\E [\frac{1}{S'}]$, where $S' := \min_{j \in [q]} S'_j$. 
Here, we relax the random selection by making the sampling domain continuous.  

We first show this relaxation does not change the expectation by much. 

\begin{lemma}
    \label{lem:ratio-bd1}
    If $1 \leq q^2 < r$, we have $\E_{\pi(r-1)} [\frac{r}{S}] \leq %
    6 \E [\frac{1}{S'}]$.
\end{lemma}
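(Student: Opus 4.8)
\textbf{Proof proposal for Lemma~\ref{lem:ratio-bd1}.}

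The goal is to show that replacing the discrete sampling of $q$ points from $[r-1]$ without replacement by a continuous sampling of $q$ points from $[1/r,1]$ (i.i.d.\ uniform) changes the expectation $\E[r/S]$ only by a bounded multiplicative factor, as long as $q^2 < r$. The plan is to relate the discrete minimum $S = \min_j S_j$ to the continuous minimum $S' = \min_j S'_j$ through a tail-probability comparison. Concretely, I would write both expectations as integrals of tail probabilities: $\E_{\pi([r-1])}[r/S] = \sum_{m\ge 1} \Pr[r/S \ge m]$ and $\E[1/S'] = \int_0^\infty \Pr[1/S' \ge t]\,dt$, and then compare the relevant tail events. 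The event $\{S \le s\}$ for the discrete sample (uniform without replacement from $\{1,\dots,r-1\}$) has probability $1 - \binom{r-1-\lfloor s\rfloor}{q}/\binom{r-1}{q}$, while the event $\{S' \le s\}$ for the continuous i.i.d.\ sample from $[1/r,1]$ has probability $1 - \left(\frac{1 - s}{1 - 1/r}\right)^q$ (for $s \ge 1/r$). The standard comparison between sampling with and without replacement, together with $q$ small relative to $r$ (here $q^2 < r$ gives $q/r < 1/q \le 1$, so the ``correction terms'' $\binom{r-1-j}{q}/\binom{r-1}{q}$ versus $(1-j/(r-1))^q$ are within a constant factor), should let me sandwich one tail by a constant multiple of the other in the regime that matters.

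The key steps, in order, would be: (1) rescale so that both minima live in $[1/r,1]$ and express $\E[r/S]$ as essentially $\E[1/\bar S]$ where $\bar S = S/r \in \{1/r, 2/r, \dots\}$; (2) for each threshold $s$, bound $\Pr[\bar S \le s]$ above and below in terms of $\Pr[S' \le s]$, using the inequality $(1-x)^q \le \binom{r-1-j}{q}/\binom{r-1}{q} \le (1 - j/(r-1))^q$ with $x = j/(r-q)$ or similar, which loses only a factor controlled by $\exp(q\cdot O(q/r)) = O(1)$ since $q^2 < r$; (3) integrate / sum these pointwise tail comparisons to conclude $\E[r/S] \le C \cdot \E[1/S']$ for an absolute constant $C$; (4) track the constant carefully to get $C \le 6$ — this will require being a bit careful near the boundary $s \approx 1/r$, where $\bar S$ can equal exactly $1/r$ (the event that rank $1$ is sampled) and contributes the dominant term $r$ to $r/S$ with probability $\approx q/r$, matching the continuous side's contribution of $\int_0^r \Pr[1/S' \ge t]dt$ near its own boundary.

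The main obstacle I anticipate is step (2)/(4): getting an honest constant like $6$ rather than just ``$O(1)$.'' The subtlety is that the discrete distribution of $S$ is supported on integers, so $r/S$ takes the large value $r$ with probability exactly $\binom{r-2}{q-1}/\binom{r-1}{q} = q/(r-1)$, and more generally $\Pr[S = m] = \binom{r-1-m}{q-1}/\binom{r-1}{q}$; one must compare $\sum_m \frac{1}{m}\binom{r-1-m}{q-1}/\binom{r-1}{q}$ directly against $\int_{1/r}^1 \frac{1}{s} \cdot q\frac{(1-s)^{q-1}}{(1-1/r)^q}\,ds$. The ratio of the $m$-th discrete term to the corresponding density contribution is roughly $\frac{\binom{r-1-m}{q-1}}{(r-1)\,(1-m/(r-1))^{q-1}(1-1/r)^{-q}}$, which one shows is bounded by a small constant uniformly in $m$ whenever $q^2 < r$ (the worst case being small $m$); a factor of $6$ should be comfortably enough slack to absorb both the with/without-replacement discrepancy and the discrete-to-continuous Riemann-sum error. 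I would isolate this as a clean sub-claim of the form ``for all $1 \le m \le r-q$, the $m$-th discrete term is at most (a constant times) the continuous contribution on $[(m-1)/r, m/r]$,'' prove it by the binomial-ratio estimate above, and then sum.
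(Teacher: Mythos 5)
Your route --- an explicit term-by-term comparison of the discrete probability mass $\Pr[S=m]=\binom{r-1-m}{q-1}/\binom{r-1}{q}$ against the continuous density $q(1-s)^{q-1}/(1-1/r)^q$ --- is genuinely different from the paper's, and it is plausibly completable, but as written it has a real gap: the decisive quantitative step is asserted rather than proved. Everything hinges on your sub-claim that the ratio of the $m$-th discrete term to the corresponding slice of the integral is bounded by a small explicit constant \emph{uniformly in $m$}, and on the bookkeeping that makes the accumulated constants come out to $6$; you flag this yourself as ``the main obstacle'' and then say a factor of $6$ ``should be comfortably enough slack.'' That is exactly the part that constitutes the proof. (Two smaller issues: the identity $\E[X]=\sum_{m\ge1}\Pr[X\ge m]$ you invoke at the start holds only for integer-valued $X$, and $r/S$ is not integer-valued --- though your later switch to comparing $\sum_m \frac{1}{m}\Pr[S=m]$ directly against the integral avoids this; and you align the $m$-th discrete term with the interval $[(m-1)/r,m/r]$, i.e.\ rounding up, whereas the natural coupling rounds down to $[m/r,(m+1)/r)$ --- either can be made to work but the constants differ.)

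For contrast, the paper avoids all binomial estimates via a short coupling argument: sample $S'_1,\dots,S'_q$ i.i.d.\ uniform on $[1/r,1]$, round each down to the nearest multiple of $1/r$, and condition on the rounded values being distinct; conditioned on that event the rounded set is exactly a uniform without-replacement sample from $\{1/r,\dots,(r-1)/r\}$. Rounding down costs at most a factor of $2$ in $1/(\cdot)$ (since $S'\le \bar S'+1/r\le 2\bar S'$), and deconditioning costs at most a factor of $1/\Pr[\text{all distinct}]\le 3$, using $(1-q/(q^2+1))^q\ge 1/3$, which is where the hypothesis $q^2<r$ enters. The product $2\times 3$ gives the $6$ with no uniform-in-$m$ estimate needed. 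If you want to salvage your approach, the cleanest fix is to prove your sub-claim via the bound $\binom{r-1-m}{q-1}/\binom{r-2}{q-1}\le\bigl(\tfrac{r-1-m}{r-2}\bigr)^{q-1}$ and the elementary inequality $\bigl(\tfrac{r}{r-2}\bigr)^{q-1}=O(1)$ under $q^2<r$ --- but you will find the resulting constant harder to pin down at $6$ than the paper's.
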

\begin{proof}
    Scale down $S_1, \ldots, S_q$ by a factor of $r$, so we can pretend that they are sampled from $D  = \{\frac{1}{r}, \ldots, \frac{r-1}{r}\}$ without replacement. Now, we want to upper bound $\E [\frac{1}{S}]$. Let $\cD := {D \choose q}$. 
    
    We observe that 
    \begin{equation*}
     \E [\frac{1}{S}] = \E [\frac{1}{\bar S'} \; \mid \; \mathcal{\bar S}' \in \cD], 
    \end{equation*}
    where $\mathcal{\bar S}' := \{\lfloor r S'_i \rfloor / r \; | \; i \in [q]\}$ and $\bar S' = \min \mathcal{\bar S}'$.
         In other words, after ``rounding'' down each $S'_i$ to the nearest multiple of $1 / r$, if they are all distinct, we keep them. This is an equivalent way of getting samples $S_1, \ldots, S_q$. 
    
    Further, the rounding changes the expectation by a factor of at most 2. Therefore, we have,
    \begin{equation*}
     \E [\frac{1}{\bar S'} \; \mid \; \mathcal{\bar S}' \in \cD] \leq 2\E [\frac{1}{ S'} \; \mid \; \mathcal{\bar S}' \in \cD].
    \end{equation*}
    
    We would like to decondition on $\bar S' \in \cD$. 
    \begin{equation*}
    \E [\frac{1}{ S'} \; \mid \; \mathcal{\bar S}' \in \cD] \cdot \Pr[\mathcal{\bar S}' \in \cD]
    \leq  \E [\frac{1}{ S'}] 
    \end{equation*}

As $\mathcal{\bar S}'$ is a uniform sample with replacement, we have, 
$\Pr[\mathcal{\bar S}' \in \cD] = \frac{r (r-1) \cdots  (r - q+1)}{r^q} \geq (1 - q/r)^q \geq ( 1- q/ (q^2 +1))^q \geq 1/3$, where the last inequality follows from a simple calculation.
Combining the above equations yields the lemma. 
\end{proof}

\begin{lemma}
    \label{lem:ratio-bd2}
    If $1 \leq q^2 < r$, we have $\E [\frac{1}{S'}] \leq  2 q \ln (k+1)$.
\end{lemma}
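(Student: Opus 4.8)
The plan is to evaluate $\E[1/S']$ via the layer-cake (tail-integral) formula and then bound the resulting one-dimensional integral with Bernoulli's inequality. Recall the setup from Lemma~\ref{lem:ratio-bd1}: $S'_1,\dots,S'_q$ are independent samples from the uniform distribution on the interval $[1/r,1]$ and $S' := \min_{j\in[q]} S'_j$. Since $S' \in [1/r,1]$ almost surely, the random variable $1/S'$ takes values in $[1,r]$, so
\[
\E\!\left[\tfrac{1}{S'}\right] \;=\; \int_0^\infty \Pr\!\left[\tfrac{1}{S'} > t\right] dt \;=\; 1 + \int_1^{r} \Pr\!\left[\tfrac{1}{S'} > t\right] dt .
\]
For $t \in [1,r]$ we compute $\Pr[1/S' > t] = \Pr[S' < 1/t] = 1 - \Pr[\text{all } S'_j \ge 1/t] = 1 - \big(\tfrac{1-1/t}{\,1-1/r\,}\big)^q$, using that a single uniform on $[1/r,1]$ exceeds $x$ with probability $(1-x)/(1-1/r)$.

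Next I would bound the integrand. Because $1 - 1/r \le 1$, we have $\big(\tfrac{1-1/t}{1-1/r}\big)^q \ge (1-1/t)^q$, and Bernoulli's inequality — legitimate since $q \ge 1$, which is exactly the hypothesis $q^2 \ge 1$ — gives $(1-1/t)^q \ge 1 - q/t$. Hence $\Pr[1/S' > t] \le q/t$, and
\[
\E\!\left[\tfrac{1}{S'}\right] \;\le\; 1 + \int_1^{r} \frac{q}{t}\,dt \;=\; 1 + q\ln r \;\le\; 1 + q\ln(k+1) \;\le\; 2q\ln(k+1),
\]
using $r = R_{i-1} \le R_0 = k+1$ for the penultimate step and $q \ge 1$, $\ln(k+1) \ge 1$ (true for any nontrivial cache size $k \ge 2$) for the last.

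The only subtlety worth flagging is that one must not short-circuit the argument by bounding $1/S'$ through the ``$r \to \infty$'' continuous minimum: if one rescales $S'_j = 1 - (1-1/r)T_j$ with $T_j$ uniform on $[0,1]$ and then replaces $1 - (1-1/r)\max_j T_j$ by $1 - \max_j T_j$, the resulting expectation $\E[1/(1-\max_j T_j)]$ diverges. It is precisely the left truncation of the support at $1/r$ — equivalently, the deterministic bound $1/S' \le r$ baked into the finite upper limit $r$ of the tail integral — that makes the integral converge and produces the $\ln r \le \ln(k+1)$ factor. I also note that the full hypothesis $q^2 < r$ is not needed in this step (it was consumed in Lemma~\ref{lem:ratio-bd1} to control the rounding loss); here only $q \ge 1$ and $r \le k+1$ are used.
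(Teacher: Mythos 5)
Your proof is correct, and it takes a different route from the paper's. The paper works with the density of $S'$, writes $\E[1/S']$ as $\int_{1/r}^{1}\frac{1}{x}\,q(1-x)^{q-1}\,dx$ divided by the normalizing factor $(1-1/r)^{q}$, splits the integral at $x=1/q$ (bounding $(1-x)^{q-1}\le 1$ on the left piece and $1/x\le q$ on the right), and then invokes the hypothesis $q^2<r$ precisely to control the normalizer via $(1-1/r)^{-q}\le 2$. You instead use the layer-cake formula and Bernoulli's inequality, and you absorb the normalizer up front by the monotone bound $\bigl(\tfrac{1-1/t}{1-1/r}\bigr)^q\ge(1-1/t)^q$, so that $\Pr[1/S'>t]\le q/t$ and the integral $\int_1^r q/t\,dt=q\ln r$ falls out in one line. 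What your approach buys: it is shorter, it avoids the density computation and the case split, it sidesteps a small slack issue in the paper's chain $q\ln(r/q)+q\le q\ln r$ (which as written needs $q\ge e$), and — as you correctly observe — it does not consume the hypothesis $q^2<r$ at all, only $q\ge 1$ and $r\le k+1$; the $q^2<r$ assumption is genuinely needed only in Lemma~\ref{lem:ratio-bd1}. Your closing remark about why the truncation of the support at $1/r$ is what prevents the divergence of $\E[1/\min_j U_j]$ for untruncated uniforms is also accurate and worth keeping. The only implicit dependence worth stating explicitly is $k\ge 2$ (so that $\ln(k+1)\ge 1$) in the final step, which is harmless since caching with $k=1$ is degenerate.
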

\begin{proof}
Observe that $\Pr[S' \geq x] = \left(\frac{1 - x}{1 - 1/r}\right)^q$. Thus, the pdf of $S'$ is $\frac{q(1 - x)^{q-1}}{(1 - 1 / r)^q}$ where $x \in [1/r, 1]$. For brevity, we omit the denominator in the following equations and bring it back at the end. 
\begin{align*}
&\E [\frac{1}{S'}] = \int_{x = 1 / r}^1 \frac{1}{x} q(1 - x)^{q-1} \dd x  \\
=& \int_{x = 1 / r}^{1 / q} \frac{1}{x} q(1 - x)^{q-1} \dd x + \int_{x = 1 / q}^{1} \frac{1}{x} q(1 - x)^{q-1} \dd x  \\
\leq& q \int_{x = 1 / r}^{1 / q} \frac{1}{x}  \dd x +   \int_{x = 1 / q}^{1} q\cdot  q(1 - x)^{q-1} \dd x \\
\leq& q \ln r / q  +  q \leq q \ln r \leq q \ln (k+1).
\end{align*}
Thus, by factoring in the denominator $\frac{1}{(1 - 1/r)^q} \leq \frac{1}{(1 - 1/(q^2 +1))^{q}} \leq 2$, we obtain the lemma. 
\end{proof}

\begin{corollary}
    \label{cor:ratio-bd1}
    $\E [\frac{R_i}{R_{i+1}}] \leq 12 q\ln (k+1) \E [Q_{i+1}]$.
\end{corollary}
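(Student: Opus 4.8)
The plan is to derive Corollary~\ref{cor:ratio-bd1} as a short consequence of Lemmas~\ref{lem:ratio-bd1} and~\ref{lem:ratio-bd2}, via a conditioning argument: first bound $\E[R_i/R_{i+1}]$ after conditioning on the realized values of $R_i$ and $Q_{i+1}$, then take the outer expectation over $(R_i, Q_{i+1})$.

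Concretely, I would fix the history of $\cA$ up to the moment of its $i$th eviction; this determines $r := R_i$ as well as the number $q := Q_{i+1}$ of queries that $\cA$ chooses to make at its next cache miss (a possibly randomized function of that history). By the reductions in Lemma~\ref{lem:lb-simplifying2}(i)--(iv), the simplifying assumption that $\cA$ issues at least one query before every eviction, and the reduction in Lemma~\ref{lem:quadratic} that lets us assume $R_i > Q_{i+1}^2$, the hypothesis $1 \le q^2 < r$ holds. Moreover, conditioned on this history, $R_{i+1}$ is exactly the minimum rank among $q$ pages drawn uniformly at random without replacement from the $r-1$ unmarked stale pages $\{1,\dots,r-1\}$ --- which is precisely the reduction set up in the discussion immediately preceding the corollary. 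Chaining Lemma~\ref{lem:ratio-bd1} and then Lemma~\ref{lem:ratio-bd2} therefore gives
\[
\E[R_i/R_{i+1} \mid R_i = r,\ Q_{i+1} = q] \;=\; \E_{\pi(r-1)}[r/S] \;\le\; 6\,\E[1/S'] \;\le\; 12\,q\,\ln(k+1),
\]
i.e.\ $\E[R_i/R_{i+1} \mid R_i, Q_{i+1}] \le 12\ln(k+1)\,Q_{i+1}$ pointwise on the conditioning event. Taking expectations with the tower rule yields $\E[R_i/R_{i+1}] \le 12\ln(k+1)\,\E[Q_{i+1}]$, the claimed bound. The only boundary case is $R_i = 1$, where by convention $R_{i+1} = 1$ and the ratio is $1$; restricting attention to indices before the chain terminates (or keeping the convention that $\cA$ still issues $\ge 1$ query) makes the inequality hold there too, and I would state this explicitly.

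The single delicate point --- and the reason this is factored out as a corollary rather than inlined --- is the conditional-distribution claim: that, given everything $\cA$ has observed through its $i$th eviction, the ranks of the $q$ pages it queries next form a uniform sample without replacement from $\{1,\dots,r-1\}$. This holds because $\cA$ cannot tell apart the stale pages whose rank it has not yet learned: conditioned on its observations, the rank assignment to those pages is a uniformly random bijection, so by exchangeability any $q$ of them are distributed as a uniform subset. Since that exchangeability, together with Lemmas~\ref{lem:lb-simplifying2} and~\ref{lem:quadratic}, is exactly what the paragraph preceding the corollary already establishes, I expect no genuinely new obstacle beyond checking that $1 \le q^2 < r$ is in force in the conditioned world; the rest is the two-line combination of Lemmas~\ref{lem:ratio-bd1} and~\ref{lem:ratio-bd2} displayed above.
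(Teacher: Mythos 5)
Your proposal is correct and follows essentially the same route as the paper: condition on $R_i = r$ and $Q_{i+1} = q$, chain Lemmas~\ref{lem:ratio-bd1} and~\ref{lem:ratio-bd2} to get the conditional bound $12\,q\ln(k+1)$, and decondition via the tower rule (the paper's own proof is exactly this, stated more tersely). Your added care about the exchangeability of unqueried ranks and the $R_i=1$ boundary case is consistent with, and slightly more explicit than, the paper's treatment; note also that the stray factor $q$ in the corollary's statement is a typo, and your reading ($12\ln(k+1)\,\E[Q_{i+1}]$) matches how the corollary is actually used in Lemma~\ref{lem:lb-conditional}.
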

\begin{proof}
 From Lemmas~\ref{lem:ratio-bd1} and \ref{lem:ratio-bd2}, we know $\E_{\pi([r-1])} [\frac{r}{S(r,q, \pi)}] \leq 12 q\ln (k+1)$. Here, the parameters $r, q, \pi$ are used to make the dependency of $S$ clear. As this holds for any $R_i = r$ and $Q_{i+1} = q$, by deconditioning, we have the desired result. 
\end{proof}

\begin{lemma}
    \label{lem:lb-conditional}
    If $\chain$ has length $M$ and for any integer $c > 0$, we have 
    $\E [\sum_{i = 1}^c Q_i \; | \; M = c] \geq \frac{1}{12 \ln (k+1)} c k^{1/c}.$
\end{lemma}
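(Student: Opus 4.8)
The plan is to analyze the chain $\chain$ conditioned on $M=c$ and extract a lower bound on the total number of queries from Corollary~\ref{cor:ratio-bd1}, which relates consecutive rank ratios $R_i/R_{i+1}$ to the expected number of queries $\E[Q_{i+1}]$. The key observation is a telescoping/multiplicative structure: since $R_0 = k+1$ and the chain ends when $R_i$ hits $1$, the product $\prod_{i=1}^{c} \frac{R_{i-1}}{R_i}$ equals $R_0/R_c = k+1$ (using the convention that $R_i = 1$ once the chain terminates, and conditioning on $M=c$ so that exactly $c$ evictions occur with $R_c = 1$). First I would take logarithms: $\sum_{i=1}^{c} \ln\frac{R_{i-1}}{R_i} = \ln(k+1)$. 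The goal is then to convert a bound on the \emph{sum of logs} of the ratios into a bound on the \emph{sum of the $Q_i$'s}.

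Next I would apply concavity. By Jensen's inequality (concavity of $\ln$), conditioned on $M = c$,
\[
\ln(k+1) = \E\Bigl[\sum_{i=1}^c \ln \tfrac{R_{i-1}}{R_i} \;\Big|\; M=c\Bigr] \leq c \,\ln\Bigl( \tfrac{1}{c}\E\bigl[\textstyle\sum_{i=1}^c \tfrac{R_{i-1}}{R_i} \,\big|\, M=c\bigr]\Bigr),
\]
which rearranges to $\E[\sum_{i=1}^c \frac{R_{i-1}}{R_i}\mid M=c] \geq c\, k^{1/c}$ (using $(k+1)^{1/c} \geq k^{1/c}$). Now Corollary~\ref{cor:ratio-bd1} gives, for each $i$, that $\E[\frac{R_{i-1}}{R_i}] \leq 12 \ln(k+1)\, \E[Q_i]$ (after relabeling the index); I would need the conditional version of this, i.e. the same bound holds conditioned on $M=c$ — this follows because Corollary~\ref{cor:ratio-bd1} was proved by conditioning on the values of $R_{i-1}$ and $Q_i$ and deconditioning, so it holds under any further conditioning that is measurable with respect to the history up to the $i$th eviction, and $\{M \ge i\}$ is such an event; one handles the event $\{M = c\}$ versus $\{M \ge i\}$ carefully, noting that on $M=c$ with $i \le c$ the eviction actually happens. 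Summing over $i \in [c]$ and combining with the displayed inequality yields
\[
c\, k^{1/c} \leq \E\Bigl[\textstyle\sum_{i=1}^c \tfrac{R_{i-1}}{R_i} \,\Big|\, M=c\Bigr] \leq 12\ln(k+1)\, \E\Bigl[\textstyle\sum_{i=1}^c Q_i \,\Big|\, M=c\Bigr],
\]
which is exactly the claimed bound after dividing by $12\ln(k+1)$.

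The main obstacle I anticipate is making the conditioning rigorous: Corollary~\ref{cor:ratio-bd1} is stated unconditionally, and the quantities $R_{i-1}, Q_i$ are themselves random and correlated with $M$, so I must argue that the per-step inequality $\E[R_{i-1}/R_i \mid \text{history before }i\text{th eviction}] \leq 12\ln(k+1)\,Q_i$ holds pointwise (it does, since given $R_{i-1}=r$ and $Q_i=q$ the randomness in $R_i$ comes only from the random permutation $\pi([r-1])$, exactly as in Lemmas~\ref{lem:ratio-bd1}--\ref{lem:ratio-bd2}), and then tower-expectation down onto the event $\{M=c\}$. A secondary subtlety is the convention $R_i \equiv 1$ for $i > M$, which must be installed so that conditioning on $M=c$ makes the telescoping product come out to exactly $k+1$ with exactly $c$ genuine factors; I would also note the assumption $R_i > Q_{i+1}^2$ from Lemma~\ref{lem:quadratic} is what licenses invoking Lemmas~\ref{lem:ratio-bd1}--\ref{lem:ratio-bd2} ($1 \le q^2 < r$) at every step.
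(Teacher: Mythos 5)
Your proposal is correct and follows essentially the same route as the paper: the telescoping product $\prod_{i=1}^{c} R_{i-1}/R_i = k+1$, a concavity step (your Jensen-on-logarithms is just the AM--GM inequality, which the paper applies pointwise inside the expectation), and Corollary~\ref{cor:ratio-bd1} to convert the sum of rank ratios into a bound on $\E[\sum_i Q_i \mid M=c]$. Your additional care about why the corollary survives conditioning on $M=c$ addresses a point the paper glosses over, but the underlying argument is the same.
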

\begin{proof}
Using the linearly of expectation and Corollary ~\ref{cor:ratio-bd1}, 
\begin{align*}
    &\quad 12 \ln (k+1) \sum_{i = 1}^c \E [Q_i \; | \; t = c] \\
    &\geq \E [\sum_{i = 1}^c \frac{R_{i-1}}{R_{i}} \; | \; R_c = 1, R_{c-1} > 1] \\
    &\geq  \E [c( \prod_{i = 1}^c \frac{R_{i-1}}{R_{i}})^{1/c}
    \; | \; R_c = 1, R_{c-1} > 1] \\
    &=  (k+1)^{1/c},
\end{align*}
where the last inequality follows from the AM--GM inequality and the last equality follows from a
telescoping product, $R_0 = k+1$, and $R_c = 1$. 
\end{proof}

By Lemma~\ref{lem:lb-conditional}, if  algorithm $\cA$ makes at most $c$ cache misses in expectation, then the number of queries it makes is lower bounded by the optimum objective of the following LP:
\begin{align}
    \frac{1}{12 \ln (k+1)} \min &\sum_{i \geq 1} i  k^{1/ i} x_i \nonumber \\
        \sum_{i \geq 1} i x_i &\leq c  \label{con:budget}\\
    \sum_{i \geq 1} x_i &= 1 \nonumber \\    
    x_i &\geq 0 \quad \forall i \geq 1 \nonumber
\end{align}
Here, $x_i := \Pr[M = i]$, i.e., the probability that the chain $\chain$ has length $i$, or equivalently $\cA$ makes $i$ cache misses. The last two constraints define a probability distribution over the values $M$ can have and constraint~(\ref{con:budget}) means that we can afford to make at most $c$ cache misses in expectation. 

\begin{lemma}
    If $c \leq \ln k$, then the above LP's optimum objective is at least $\frac{1}{12 \ln(k+1)} c k^{1/c}$.
\end{lemma}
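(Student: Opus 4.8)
The plan is to read the LP probabilistically and reduce it to Jensen's inequality. Recall that $x_i = \Pr[M=i]$, so the constraints $\sum_i x_i = 1$, $x_i \ge 0$, $\sum_i i x_i \le c$ simply say that $M$ is a random variable supported on the positive integers with $\E[M] \le c$; and the LP objective equals $\frac{1}{12\ln(k+1)}\,\E[g(M)]$ where $g(x) := x\,k^{1/x}$. Thus it suffices to prove $\E[g(M)] \ge c\,k^{1/c}$ for every such $M$, under the hypothesis $c \le \ln k$.

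The first step is to record two elementary facts about $g$ on $(0,\infty)$. Writing $L := \ln k$ we have $g(x) = x e^{L/x}$, hence $g'(x) = e^{L/x}\bigl(1 - L/x\bigr)$ and $g''(x) = e^{L/x} L^2 / x^3 > 0$. So $g$ is \emph{convex} on all of $(0,\infty)$, and since $g'(x) \le 0$ whenever $x \le L$, it is also \emph{nonincreasing} on $(0,L]$.

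The second step finishes the proof. By convexity and Jensen's inequality, $\E[g(M)] \ge g\bigl(\E[M]\bigr)$. Since $M$ takes positive-integer values we have $\E[M] \ge 1$, and by assumption $\E[M] \le c \le L$; thus $\E[M]$ lies in $[1,L]$, the interval on which $g$ is nonincreasing, so $g\bigl(\E[M]\bigr) \ge g(c) = c\,k^{1/c}$. Chaining these inequalities and reinstating the $\frac{1}{12\ln(k+1)}$ factor gives the claimed lower bound on the LP optimum.

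There is no serious technical obstacle here; the only point that requires attention is that the budget constraint $\E[M] \le c$ must be used so as to \emph{help} the lower bound, which works precisely because $g$ is decreasing rather than increasing on the relevant range — and this is exactly where the hypothesis $c \le \ln k$ enters. (For $c > \ln k$ the minimizing distribution would instead want to concentrate near $i \approx \ln k$, where $g$ attains its minimum value $e\ln k$, and one could only hope for a weaker bound.)
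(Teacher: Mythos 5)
Your proof is correct and follows essentially the same route as the paper's: both arguments compute $f'(y)=k^{1/y}(1-\tfrac{\ln k}{y})$ to show $f(y)=y\,k^{1/y}$ is convex and nonincreasing on $[1,\ln k]$, then apply Jensen's inequality together with the budget constraint $\sum_i i x_i \le c$ to get $\sum_i f(i)x_i \ge f(c)$. (As a minor aside, your second derivative $e^{L/x}L^2/x^3$ is the correct one; the paper's stated $\frac{\ln^2 k}{y^2}k^{1/y}$ has a typo in the power of $y$, but this does not affect the convexity conclusion.)
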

\begin{proof}
    Let $f(y) := y k^{1/y}$. By simple calculus, we have
    $f'(y) = k^{1/y}(1 - \frac{\ln k}{y})$ and $f''(y) = \frac{\ln^2 k}{y^2} k^{1/ y}$. Thus, $f$ decreases in $y$ for $y \in [1, \ln k]$ and is convex. Then, the LP objective is $\frac{1}{12 \ln (k+1)} \sum_{i \geq 1} f(i) x_i$. By convexity, we have $\sum_{i \geq 1} f(i) x_i  \geq f( \sum_{i \geq 1} i x_i)$. Then, by constraint~(\ref{con:budget}) and $f$ being decreasing in $y$, we have  $f( \sum_{i \geq 1} i x_i) \geq f(c)$. 
\end{proof}

To summarize, we have shown that any $c$-competitive algorithm must make at least $\frac{1}{12 \ln (k+1)}c k^{1/ c}$ cache misses, but under the assumption stated in Lemma~\ref{lem:quadratic}, i.e., $R_i > Q_{i+1}^2$ for all $i$. Thus, by the lemma, we have the following.

\begin{theorem}
    For any integer $c \leq \ln k$, any $c+4$-competitive algorithm must make at least $\frac{1}{12 \ln (k+1)}c k^{1/ c} \cdot \opt$ queries. 
\end{theorem}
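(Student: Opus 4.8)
The plan is to assemble the structural reductions and per-phase estimates developed above into a single chain of inequalities. First I would invoke Claim~\ref{claim:lb-opt} so that $\opt = k + H - 1$, and then use Claim~\ref{claim:lb-simplifying1} together with the identical structure of every phase to reduce the problem to a single non-first phase: if $\cA$ incurs $c'$ cache misses in expectation per such phase, then $c$-competitiveness forces $c'\to c$ as $H\to\infty$. At that point the whole question becomes: how few queries can an algorithm make in one phase subject to incurring at most $c$ cache misses in expectation on its unique eviction chain $\chain$?

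Next I would justify that it suffices to analyze algorithms satisfying Lemma~\ref{lem:lb-simplifying2}(i)--(iv) --- lazy, never evicting marked pages, querying only immediately before an eviction, and evicting the queried page of largest rank (i.e.\ furthest future request) --- and then apply Lemma~\ref{lem:quadratic} to further restrict attention to algorithms obeying $R_i > Q_{i+1}^2$ for all $i$, at the cost of an additive $4$ in the competitive ratio and a factor $5$ in the query count; this is exactly what converts a bound for $c$-competitive algorithms into the stated bound for $(c+4)$-competitive ones. Under these assumptions the core estimate is Corollary~\ref{cor:ratio-bd1}, namely $\E[R_i/R_{i+1}] \le 12\ln(k+1)\,\E[Q_{i+1}]$. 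Conditioning on the chain length $M = c$, I would chain these bounds with the AM--GM inequality and the telescoping identity $\prod_{i=1}^{c} R_{i-1}/R_i = (k+1)/1$ to obtain Lemma~\ref{lem:lb-conditional}: $\E[\sum_{i=1}^{c} Q_i \mid M = c] \ge \frac{1}{12\ln(k+1)}\, c\,k^{1/c}$.

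Deconditioning over the distribution of $M$, the expected number of queries of any algorithm incurring at most $c$ cache misses in expectation is lower bounded by the optimum of the displayed LP in the variables $x_i = \Pr[M = i]$, with budget constraint $\sum_i i x_i \le c$. I would then solve that LP: since $f(y) = y\,k^{1/y}$ is convex and decreasing on $[1,\ln k]$, Jensen's inequality plus the budget constraint give LP-optimum at least $\frac{1}{12\ln(k+1)}\, c\,k^{1/c}$ whenever $c \le \ln k$. Undoing the Lemma~\ref{lem:quadratic} reduction costs only constant factors already absorbed into the statement, so every $(c+4)$-competitive algorithm makes at least $\frac{1}{12\ln(k+1)}\, c\,k^{1/c}\cdot\opt$ queries, as claimed.

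The main obstacle is the ratio estimate behind Corollary~\ref{cor:ratio-bd1} (Lemmas~\ref{lem:ratio-bd1} and \ref{lem:ratio-bd2}). Unlike the upper-bound analysis, the relevant randomness here is a uniformly random permutation of the stale pages rather than fresh i.i.d.\ uniform samples, so one cannot directly quote order statistics of the uniform distribution; one must pass to a continuous relaxation (sampling from $[1/r,1]$), control the error from rounding to multiples of $1/r$ (a factor $2$) and from conditioning on the rounded samples being distinct (a factor $3$, using precisely the hypothesis $q^2 < r$ that the Lemma~\ref{lem:quadratic} reduction supplies), and only then integrate the density of the minimum. Making these constants line up --- and keeping the conditioning/deconditioning on $R_i$ and $Q_{i+1}$ honest --- is the delicate part; the remainder is bookkeeping with linearity of expectation, AM--GM, and a one-variable convexity argument.
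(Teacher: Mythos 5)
Your proposal follows essentially the same route as the paper's proof: the same reduction to a single phase via Claims~\ref{claim:lb-opt} and~\ref{claim:lb-simplifying1}, the same structural simplifications in Lemma~\ref{lem:lb-simplifying2} and the $R_i > Q_{i+1}^2$ reduction of Lemma~\ref{lem:quadratic}, the same ratio estimate of Corollary~\ref{cor:ratio-bd1} combined with AM--GM and telescoping in Lemma~\ref{lem:lb-conditional}, and the same convex LP argument. The only nit is notational: under the paper's rank convention the furthest-in-the-future queried page is the one of \emph{smallest} rank, not largest.
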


\end{document}